\newtcolorbox{mytheorem}[1][]{
    colback=blue!5!white,    % Background color
    colframe=blue!75!black,  % Frame color
    fonttitle=\bfseries,     % Title font
    coltitle=black,          % Title color
    title={#1}              % Title of the box
}
\newtcolorbox{myalg}[1][]{
    float,
    colback=green!5!white,    % Background color
    colframe=green!75!black,  % Frame color
    fonttitle=\bfseries,     % Title font
    coltitle=black,          % Title color
    title={#1}              % Title of the box
}
\newmdenv[
    backgroundcolor=yellow!10,
    linecolor=yellow!50!black,
    skipabove=10pt,
    skipbelow=10pt,
    leftmargin=10pt,
    rightmargin=10pt,
    innertopmargin=10pt,
    innerbottommargin=10pt,
    roundcorner=5pt,
]{algorithmbox}
\newtheorem{theorem}{Theorem}
\newtheorem{lemma}[theorem]{Lemma}
\newcommand{\eps}{\varepsilon}
\newcommand{\sbs}{\subseteq}
\newcommand{\rr}{\mathbb{R}}
\newcommand{\Ahat}{\widehat{A}}
\newcommand{\nn}{\mathbb{N}}
\newcommand{\R}{\mathbb{R}}
\newcommand{\rank}{\texttt{rank}}
\newcommand{\cutrank}{\texttt{cutrank}}
\DeclareMathOperator{\tower}{tower}
\DeclareMathOperator{\disc}{Disc}
\newcommand{\Szemeredi}{Szemer\'{e}di}
\title{Notes on the Linear Algebraic View of Regularity Lemmas\footnote{These notes are a pared-down version of Tuong Le's undergraduate honors thesis at the University of Michigan \cite{thesis}, and therefore contain significant text overlap.}}
\author{Greg Bodwin and Tuong Le\\University of Michigan\\\texttt{\{bodwin, tuongle\}@umich.edu}}
\date{}
\begin{document}

\maketitle

\begin{abstract}
When regularity lemmas were first developed in the 1970s, they were described as results that promise a partition of any graph into a ``small'' number of parts, such that the graph looks ``similar'' to a random graph on its edge subsets going between parts.
Regularity lemmas have been repeatedly refined and reinterpreted in the years since, and the modern perspective is that they can instead be seen as \emph{purely linear-algebraic} results about sketching a large, complicated matrix with a smaller, simpler one.
These matrix sketches then have a nice interpretation about partitions when applied to the adjacency matrix of a graph.

In these notes we will develop regularity lemmas from scratch, under the linear-algebraic perspective, and then use the linear-algebraic versions to derive the familiar graph versions.
We do not assume any prior knowledge of regularity lemmas.
Some comfort with linear algebra will be helpful, although we recap many of the relevant linear-algebraic definitions as we go.
\end{abstract}

\thispagestyle{empty}
\setcounter{page}{0}

\clearpage

\section{Introduction}

A large and active research area within theoretical computer science is that of \emph{graph sparsification}.
Broadly, the goal is to take some large graph $G$, and then compute a much smaller or simpler graph $H$ (a ``sparsifier'') that acts as a low-error representation of $G$ with respect to some particular structural properties.
Some examples of widely-deployed sparsifiers include spanners \cite{PU89, survey}, preservers \cite{CE06, AB24}, flow/cut/spectral sparsifiers \cite{ST11, BSS09}, mimicking networks \cite{KR13, Liu23}, etc.

Some graph sparsifier paradigms are, under the hood, not really about graphs after all: they can instead be viewed as a way to approximate a large \emph{matrix} $A$ with a smaller or simpler matrix $\Ahat$.
This matrix approximation may then be interpreted as a sparsifier when applied to some matrix associated to a graph, like its adjacency matrix or Laplacian.
Spectral sparsifiers are a famous example of this: in modern papers, the theorems, proofs, and constructions of high-quality spectral sparsifiers are usually given in purely linear-algebraic terms.

\emph{Regularity lemmas} are another such example.
They were originally stated and proved using the language of graphs, and sometimes interpreted as sparsifiers, but eventually a linear-algebraic view of things began to develop.
Following the first regularity lemma by \Szemeredi{} in the 1970's \cite{S75}, the matrix perspective was initiated decades later in the seminal work of Frieze and Kannan \cite{FK99}, and then significantly developed by Lov{\'a}sz and Szegedy \cite{LS07}.
From this linear-algebraic standpoint, a regularity lemma is the general term for a result that takes a matrix $A$, and then gives a matrix $\Ahat$ that simultaneously satisfies:
\begin{itemize}
\item (Low-Rankness) $\Ahat$ has low rank, or perhaps it satisfies some reasonable combinatorial analog of low-rankness.

\item (Low-Error Approximation) $A - \Ahat$ has a small top singular value, or perhaps small top-$k$ singular values, or perhaps some reasonable combinatorial analog of these properties.
\end{itemize}

The goal of these notes is to develop regularity lemmas from scratch, with a heavy emphasis on this matrix perspective.
In particular, we start with the classical low-rank approximation theorem, and then we set up and prove a new ``strong'' variant of it, which expresses the key idea that distinguishes the Strong and \Szemeredi{} regularity lemmas \cite{S75, AFKS00} from the weak regularity lemma of Frieze and Kannan \cite{FK99}.
We then use this strong regularity lemma to derive classical graph-theoretic phrasings of regularity lemmas, to clarify its relationship.

\section{Preliminaries and Notation}

The \emph{Frobenius Norm} of a matrix $A$ is defined by
$$\|A\|_F = \left(\sum \limits_{i,j} A_{i,j}^2\right)^{1/2}.$$
Equivalently, it can be defined as the $\ell^2$ norm of the multiset of singular values of $A$; that is,
$$\|A\|_F := \left(\sum \limits_{\sigma \text{ singular value of $A$}} \sigma^2\right)^{1/2}.$$
Associated to the Frobenius Norm is the Frobenius Inner Product over matrices, which functions like the entrywise vector inner product:
$$\langle A, B \rangle_F := \sum \limits_{i, j} A_{ij} B_{ij}.$$
We will frequently reference the ``projection'' of one matrix onto another; formally, this is the projection under this Frobenius inner product, and so it is the same as classical vector projection.
In other words, the projection of a matrix $A$ onto a matrix $B$ is defined as the matrix $\lambda B$, where $\lambda $ is the scalar selected such that
$$\left\langle A - \lambda B, \lambda B\right\rangle_F = 0.$$
It will be useful to discuss the $\ell^2$ mass of just the top few singular values of a matrix.
We could write:
$$\|A\|_{F[k]} := \left(\sum \limits_{\sigma \text{ one of the top $k$ singular values of $A$}} \sigma^2\right)^{1/2},$$
but to foreshadow some other matrix norms that we will come later in these notes, we can equivalently view this norm as:
$$\|A\|_{F[k]} := \max \limits_{A' \text{ is a projection of } A \text{ onto a matrix of rank } \le k} \|A'\|_F.$$
The special case $\|A\|_{F[1]}$ is simply the top singular value of $A$, which is also known as the operator norm or the spectral norm.
We will favor this non-standard $F[1]$ notation for operator/spectral norm in order to emphasize the progression of theorems proved in these notes.

\section{Low Rank Approximation}

\subsection{Weak Low-Rank Approximation}

We will start with the following well-known theorem:

\begin{mytheorem}
\begin{theorem} [Weak Low-Rank Approximation] \label{thm:lra}
For any matrix $A$ and $\eps > 0$, there is a matrix $\Ahat$ satisfying:
\begin{itemize}
\item (simplicity) $\Ahat$ has rank $< \eps^{-2}$, and
\item (approximating) $\|A - \Ahat\|_{F[1]} \le \eps \|A\|_F$.
\end{itemize}
\end{theorem}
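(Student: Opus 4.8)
The plan is to build $\widehat{A}$ greedily as an orthogonal projection of $A$ onto a growing subspace of low-rank matrices, via a standard ``energy increment'' argument. I would start from $\widehat{A}_0 = 0$, viewed as the projection of $A$ onto the trivial subspace $V_0 = \{0\}$, and maintain the invariant that $\widehat{A}_i = \mathrm{proj}_{V_i}(A)$ where $V_i$ is a subspace of matrices spanned by at most $i$ rank-one matrices; this guarantees $\mathrm{rank}(\widehat{A}_i) \le i$ by subadditivity of rank, and $A - \widehat{A}_i \perp V_i$ in the Frobenius inner product. (The case $A = 0$ is trivial, so assume $A \ne 0$.)

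The iteration: as long as $\|A - \widehat{A}_i\|_{F[1]} > \eps\|A\|_F$, let $u,v$ be unit top singular vectors of the residual $A - \widehat{A}_i$, so that $\langle A - \widehat{A}_i, uv^\top\rangle_F = \|A - \widehat{A}_i\|_{F[1]} > \eps\|A\|_F$ while $\|uv^\top\|_F = \|u\|_2\|v\|_2 = 1$. Put $V_{i+1} = V_i + \mathrm{span}(uv^\top)$ and $\widehat{A}_{i+1} = \mathrm{proj}_{V_{i+1}}(A)$. By the Pythagorean identity for nested projections, $\|A - \widehat{A}_{i+1}\|_F^2 = \|A - \widehat{A}_i\|_F^2 - \|\widehat{A}_{i+1} - \widehat{A}_i\|_F^2$, and $\widehat{A}_{i+1} - \widehat{A}_i$ is exactly the projection of the residual $A - \widehat{A}_i$ onto the component of $uv^\top$ orthogonal to $V_i$ — a direction of Frobenius norm at most $1$ against which the residual still has inner product greater than $\eps\|A\|_F$ — so $\|\widehat{A}_{i+1} - \widehat{A}_i\|_F^2 > \eps^2\|A\|_F^2$. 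Thus each step decreases $\|A - \widehat{A}_i\|_F^2$ by strictly more than $\eps^2\|A\|_F^2$.

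Since $\|A - \widehat{A}_0\|_F^2 = \|A\|_F^2$ and Frobenius norms are nonnegative, after $t$ completed iterations we have $0 \le \|A-\widehat{A}_t\|_F^2 < (1 - t\eps^2)\|A\|_F^2$, forcing $t < \eps^{-2}$; so the process halts, and the terminal matrix $\widehat{A} := \widehat{A}_i$ satisfies $\|A - \widehat{A}\|_{F[1]} \le \eps\|A\|_F$ with $\mathrm{rank}(\widehat{A}) \le i < \eps^{-2}$, as required. The one step that needs care is the geometric claim that appending the top singular direction of the residual drops the squared error by at least $\sigma_1^2$ even though $uv^\top$ need not be orthogonal to $V_i$: this works because $A - \widehat{A}_i \perp V_i$ means the inner product of the residual with $uv^\top$ equals its inner product with the (possibly shorter) orthogonal component, so the projection onto that component has length at least $\sigma_1$. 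As a shortcut one could instead just truncate the SVD of $A$ after $k = \lceil \eps^{-2}\rceil - 1$ terms, since then $\|A - \widehat{A}\|_{F[1]} = \sigma_{k+1}$ while $\|A\|_F^2 \ge (k+1)\sigma_{k+1}^2 \ge \eps^{-2}\sigma_{k+1}^2$; but I would present the increment argument, as it is the template that the forthcoming ``strong'' variant will build on.
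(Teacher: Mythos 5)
Your proof is correct and takes essentially the same route as the paper: a greedy energy-increment argument on the potential $\|A-\widehat{A}\|_F^2$, which starts at $\|A\|_F^2$, is nonnegative, and drops by more than $\eps^2\|A\|_F^2$ per round, forcing termination within $\eps^{-2}$ rounds while rank grows by at most one per round. The only cosmetic difference is bookkeeping: the paper directly accumulates $\widehat{A} \gets \widehat{A} + Q$ where $Q$ is a rank-one projection of the current residual (so the orthogonality giving the Pythagorean drop is immediate), whereas you maintain $\widehat{A}_i$ as the full projection onto a growing subspace $V_i$ and therefore need the extra observation that $uv^\top \notin V_i$ and that its component orthogonal to $V_i$ is short --- a step you correctly flag and handle, but which the paper's phrasing sidesteps.
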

\end{mytheorem}

This theorem is deeply related to the singular value decomposition (SVD), but it does not require the full machinery of the SVD to show that this matrix $\Ahat$ exists.
We can use the following simple greedy algorithm:

\DontPrintSemicolon
%\begin{algorithmbox}
%\begin{minipage}{0.6\textwidth}
\begin{algorithm}[h]
\textbf{Input:} matrix $A$, error $\eps > 0$~\\~\\

$\widehat{A} \gets 0$\;

\While{there exists a rank-1 projection $Q$ of $A - \widehat{A}$ with $\|Q\|_F > \eps \|A\|_F$}{
    $\widehat{A} \gets \widehat{A} + Q$\;
}
\Return{$\widehat{A}$}

\caption{\label{alg:vlra} Algorithm for Low-Rank Approximation (Theorem \ref{thm:lra})}
\end{algorithm}
%\end{minipage}
%\end{algorithmbox}

Correctness of Algorithm \ref{alg:vlra} is easy: if and when it halts, it is precisely the definition of the norm $F[1]$ that
$$\|A - \widehat{A}\|_{F[1]} \le \eps \|A\|_F.$$
So it remains to prove the rank bound.
In each round we add a rank-1 matrix to $\Ahat$, which means that we only need to argue that the algorithm always halts within $<\eps^{-2}$ iterations of its main loop.
The idea here is to use $\|A - \Ahat\|_F^2$ as a potential function, which will change each time we update $\Ahat \gets \Ahat+Q$.
Notice that:
\begin{itemize}
\item Initially $\Ahat = 0$, so the potential starts at $\|A\|_F^2$.
\item Potential must remain nonnegative at all times, since it is measured by a norm.
\item Since $Q$ is a projection of $A - \Ahat$, it is orthogonal to $A - \widehat{A} - Q$ (under Frobenius inner product), which means we have the identity 
$$\|A - \Ahat - Q\|_F^2 + \|Q\|_F^2 = \|A - \Ahat\|_F^2.$$
Rearranging, we get
$$\|Q\|_F^2 = \|A - \widehat{A}\|_F^2 - \|A - \widehat{A} - Q\|_F^2.$$
The right-hand side is precisely the amount that potential reduces when we update $\Ahat \gets \Ahat + Q$.
The left-hand side can be lower bounded by $\|Q\|_F^2 > \eps^2 \|A\|_F^2$, due to the condition of the main loop of the algorithm.
So the potential decreases by strictly more than $\eps^2 \|A\|_F^2$ in each round.
\end{itemize}

Putting these three facts together, we conclude that the algorithm halts within $<\eps^{-2}$ rounds.

\subsection{Weak Top-$k$ Low-Rank Approximation}

The statement of Theorem \ref{thm:lra} draws attention to the tradeoff between rank and approximation quality, governed by $\eps$.
This is \textbf{not} actually the tradeoff that is changed in subsequent regularity lemmas.
Instead, the tradeoff of interest is between rank and \emph{number of singular values approximated}.
To highlight this, let us write down the following straightforward corollary of Weak Low-Rank Approximation:
\begin{mytheorem}
\begin{theorem} [Weak Top-$k$ Low-Rank Approximation] \label{thm:klra}
For any matrix $A$, $\eps > 0$, and positive integer $k$, there is a matrix $\Ahat$ satisfying:
\begin{itemize}
\item (simplicity) $\Ahat$ has rank $< k \eps^{-2}$, and
\item (approximating) $\|A - \Ahat\|_{F[k]} \le \eps \|A\|_F$.
\end{itemize}
\end{theorem}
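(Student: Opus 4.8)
The plan is to derive Theorem \ref{thm:klra} directly from Theorem \ref{thm:lra} by running the weak low-rank approximation theorem with a smaller error parameter. Specifically, I would apply Theorem \ref{thm:lra} to $A$ with error parameter $\eps' := \eps / \sqrt{k}$ in place of $\eps$. This immediately yields a matrix $\Ahat$ of rank $< (\eps')^{-2} = k\eps^{-2}$, which is exactly the simplicity bound we want. So the whole content of the proof is to check that the operator-norm bound $\|A - \Ahat\|_{F[1]} \le \eps' \|A\|_F$ provided by Theorem \ref{thm:lra} upgrades to the $F[k]$ bound $\|A - \Ahat\|_{F[k]} \le \eps \|A\|_F$ claimed here.

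The key step is the elementary observation that for \emph{any} matrix $M$ and any positive integer $k$, we have $\|M\|_{F[k]} \le \sqrt{k}\,\|M\|_{F[1]}$. This follows directly from the singular-value definition of $F[k]$: the quantity $\|M\|_{F[k]}^2$ is the sum of the squares of the top $k$ singular values of $M$, each of which is at most $\sigma_1(M)^2 = \|M\|_{F[1]}^2$, so the sum is at most $k\,\|M\|_{F[1]}^2$. Applying this with $M = A - \Ahat$ gives
$$\|A - \Ahat\|_{F[k]} \le \sqrt{k}\,\|A - \Ahat\|_{F[1]} \le \sqrt{k}\cdot \frac{\eps}{\sqrt k}\,\|A\|_F = \eps\,\|A\|_F,$$
which is the desired approximation bound. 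Combined with the rank bound from the previous paragraph, this completes the proof.

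I do not expect any genuine obstacle here; the statement is explicitly flagged in the text as a ``straightforward corollary,'' and the only thing to be careful about is that $\eps'$ is still a valid (positive real) error parameter for Theorem \ref{thm:lra}, which it obviously is. One could alternatively prove the $F[k]$ bound by noting that if $A'$ is the rank-$\le k$ projection of $A - \Ahat$ achieving the max in the definition of $\|A - \Ahat\|_{F[k]}$, then $A'$ is a sum of $k$ mutually orthogonal rank-$1$ projections of $A - \Ahat$, each of Frobenius norm at most $\|A-\Ahat\|_{F[1]}$; by the Pythagorean identity $\|A'\|_F^2$ is the sum of their squared norms, hence at most $k\,\|A-\Ahat\|_{F[1]}^2$. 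This is the same bound dressed up in the projection language that the paper prefers, and either phrasing works.
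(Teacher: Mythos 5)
Your proof is correct and is essentially identical to the paper's: both apply Theorem \ref{thm:lra} with error parameter $\eps/\sqrt{k}$ and then observe that each of the top $k$ singular values of $A-\Ahat$ is at most $(\eps/\sqrt{k})\|A\|_F$, giving $\|A-\Ahat\|_{F[k]} \le \eps\|A\|_F$. The alternative phrasing you sketch at the end (sum of $k$ orthogonal rank-$1$ projections) is just a repackaging of the same bound, so there is no meaningful divergence from the paper.
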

\end{mytheorem}
To prove this, apply Weak Low-Rank Approximation with parameter $\eps/\sqrt{k}$.
Note that all singular values of $A - \widehat{A}$ are between $0$ and $\eps / \sqrt{k}$, and so we have
\begin{align*}
\|A - \Ahat\|_{F[k]} &= \left(\sum \limits_{\sigma \text{ one of the top $k$ singular values of $A - \widehat{A}$}} \sigma^2\right)^{1/2}\\
&\le \left(k \cdot \left(\frac{\eps}{\sqrt{k}} \cdot \|A\|_F\right)^2\right)^{1/2}\\
&=\eps \|A\|_F.
\end{align*}

With this theorem, there is now a tradeoff governed by $k$, in addition to a tradeoff governed by $\eps$.
When we move to our next regularity lemma, the nature of this tradeoff governed by $k$ is the one that will change.

% In a sense, these tradeoffs are both exactly tight, as evidenced by the following simpler lower bound:

% \begin{mytheorem}
% \begin{theorem} [Weak Top-$k$ Low-Rank Approximation, Lower Bound] \label{thm:klralb}
% For any $\eps > 0$ and positive integer $k$, there is a matrix $A$ such that there does \textbf{not} exist a matrix $\Ahat$ satisfying
% \begin{itemize}
% \item (simplicity) $\Ahat$ has rank $k \eps^{-2}$, and
% \item (approximating) $\|A - \Ahat\|_{F[k]} \le o\left(\eps \|A\|_F\right)$.
% \end{itemize}
% \end{theorem}
% \end{mytheorem}
% \begin{proof}
% Let $A$ be the identity matrix of rank $k \eps^{-2} + k$.
% For any matrix $\widehat{A}$ of rank $<k \eps^{-2} - 1$, the codimension of $\widehat{A}$ has rank $k$, and thus
% \begin{align*}
% \|A - \widehat{A}\|_{F[k]} &\ge \left( 1 + \dots + 1\right)^{1/2}\\
%                            &= \sqrt{k}\\
%                            &= \eps \cdot \left(k \eps^{-2} \right)^{1/2}\\
%                            &= \eps \cdot \Omega\left(\|A\|_F\right). \tag*{\qedhere}
% \end{align*}
% \end{proof}

\subsection{Strong Low-Rank Approximation}

Theorem \ref{thm:klra} defines a tradeoff curve, governed by $k$, between the rank of $\Ahat$ and the number of singular values included in the approximation.
Is this tradeoff curve tight?
It turns out that there are two possible senses in which we can investigate tightness of this tradeoff.

Theorem \ref{thm:klra} gives a \emph{universally} attainable tradeoff curve: for any input matrix $A$, we can choose any $k$ that we like, and so we can reach any point on the tradeoff curve that we like.
In that sense, it is optimal: there is no better universally-attainable tradeoff curve.
(For example, a tight lower bound is given by the graph with $\lceil k \eps^{-2}\rceil$ ones along its diagonal and all other entries set to zero; we will not give details on the proof of correctness for this lower bound but it is straightforward.) 

But in some applications, we only need an \emph{existentially} attainable tradeoff curve: for any input matrix $A$, \emph{there exists} at least one choice of $k$ for which the corresponding matrix $\Ahat$ exists.
This is a weaker condition, and so we could dream of it holding for better tradeoff curves.
That is exactly what is achieved by the strong version of low-rank approximation.
For intuition, we note that in this approach, we should intuitively expect the \emph{support} of the tradeoff curve to be a relevant parameter.
That is: if we have a larger pool of choices of $k$, then it should be easier to show that one exists satisfying a theorem.

In the following, given a function $f$ an integer $i \ge 0$, we will denote by $f^{(i)}$ the recursively-defined value:
$$f^{(i)} :=\begin{cases}
0 & \text{if } i=0\\
f^{(i-1)} + f\left(f^{(i-1)}\right) & \text{if } i\ge 1.
\end{cases}$$

\begin{mytheorem}
\begin{theorem} [Strong Low-Rank Approximation] \label{thm:slra}
For any matrix $A, \eps > 0$, and nondecreasing function $f : \nn \to \nn$, \textbf{there exists} an integer $k \le f^{(\eps^{-2})}$
and a matrix $\Ahat$ satisfying:
\begin{itemize}
\item (simplicity) $\Ahat$ has rank $\le k$, and
\item (approximating) $\|A - \Ahat\|_{F[f(k)]} \le \eps \|A\|_F$.
\end{itemize}
\end{theorem}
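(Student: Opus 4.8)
The plan is to iterate the idea behind Algorithm~\ref{alg:vlra}, but to grow the rank of $\Ahat$ in ``chunks'' whose sizes are governed by $f$: at each stage we strip off the top $f(\text{current rank})$ singular values of the residual, rather than just the top one. The payoff is that the potential argument will then bound the \emph{number of chunks} by $\eps^{-2}$ (exactly as in Theorem~\ref{thm:lra}), rather than bounding the total rank; so the total rank ends up controlled by $\eps^{-2}$ iterated applications of $f$, i.e.\ by $f^{(\eps^{-2})}$.

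Concretely, I would build a sequence of matrices $\Ahat_0, \Ahat_1, \dots$ and integers $r_0, r_1, \dots$, maintaining the invariant that $\Ahat_i$ has rank $\le r_i$. Start with $\Ahat_0 := 0$ and $r_0 := 0$. At stage $i$: if $\|A - \Ahat_i\|_{F[f(r_i)]} \le \eps \|A\|_F$, stop and output $k := r_i$ and $\Ahat := \Ahat_i$. Otherwise, by the variational definition of the $F[f(r_i)]$-norm there is a matrix $B_i$ that is a projection of $A - \Ahat_i$ onto a matrix of rank $\le f(r_i)$ with $\|B_i\|_F = \|A - \Ahat_i\|_{F[f(r_i)]} > \eps \|A\|_F$; set $\Ahat_{i+1} := \Ahat_i + B_i$ (of rank $\le r_i + f(r_i)$ by subadditivity of rank) and $r_{i+1} := r_i + f(r_i)$, and continue. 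Correctness of the output is immediate from the halting condition: we get a matrix of rank $\le r_{i^*} = k$ with $\|A - \Ahat\|_{F[f(k)]} \le \eps\|A\|_F$.

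Two things then remain. First, the rank bound: I claim $r_i \le f^{(i)}$ for all $i$, by induction. The base case is $r_0 = 0 = f^{(0)}$. For the step, note first that $f^{(i)} = f^{(i-1)} + f(f^{(i-1)}) \ge f^{(i-1)}$, so $i \mapsto f^{(i)}$ is nondecreasing; then if $r_i \le f^{(i)}$ we get, using that $f$ is nondecreasing, $r_{i+1} = r_i + f(r_i) \le f^{(i)} + f(f^{(i)}) = f^{(i+1)}$. Second, halting: use $\|A - \Ahat_i\|_F^2$ as a potential. It starts at $\|A\|_F^2$, stays nonnegative, and since $B_i$ is an orthogonal projection of $A - \Ahat_i$ we have the Pythagorean identity $\|A - \Ahat_i\|_F^2 = \|B_i\|_F^2 + \|A - \Ahat_{i+1}\|_F^2$, so each non-halting stage decreases the potential by $\|B_i\|_F^2 > \eps^2 \|A\|_F^2$. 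Hence there are fewer than $\eps^{-2}$ such stages, so we halt at some stage $i^* \le \eps^{-2}$, giving $k = r_{i^*} \le f^{(i^*)} \le f^{(\eps^{-2})}$.

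The genuinely new idea here — the one that separates this from Theorem~\ref{thm:lra} — is precisely the chunking choice in the middle paragraph: peeling off the top $f(r_i)$ singular values at once is what makes ``number of stages'' the quantity bounded by $\eps^{-2}$, and that is what forces the iterated-function (tower-type) bound on the rank. I do not expect any serious obstacle beyond this; the only mildly delicate bookkeeping is that $\eps^{-2}$ need not be an integer, so the inequality $i^* \le \eps^{-2}$ should be combined with the monotonicity of $i \mapsto f^{(i)}$ to land at $f^{(\eps^{-2})}$ (equivalently, one may replace $\eps^{-2}$ by $\lceil \eps^{-2}\rceil$ throughout and lose nothing). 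Everything else is the same orthogonality-plus-potential accounting already used for Theorem~\ref{thm:lra}.
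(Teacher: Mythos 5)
Your proof is correct and is essentially identical to the paper's: the same greedy algorithm that repeatedly projects the residual onto a matrix of rank $\le f(\text{current rank})$, the same potential function $\|A - \Ahat\|_F^2$ with the same Pythagorean step to bound the number of rounds by $\eps^{-2}$, and the same induction (using that $f$ is nondecreasing) to show the accumulated rank is $\le f^{(i)}$ after $i$ rounds. The only cosmetic differences are that you carry an explicit upper bound $r_i$ on the rank rather than the rank itself, and that you flag the non-integrality of $\eps^{-2}$; neither changes the substance.
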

\end{mytheorem}

In other words, for any tradeoff curve described by a function $f$ whose support includes the interval $[0, f^{(\eps^{-2})}]$, then we can hit its associated tradeoff curve existentially.
To build intuition, let us discuss some possible choices of the function $f$:
\begin{itemize}
\item When we plug in $f(n)=0$, the rank bound is $k \le 0$, and the left-hand side of the approximation guarantee is identically $0$ (since it measures the mass of the top $0$ singular values).
So the theorem is trivial.

\item When we plug in $f(n)=1$, the rank bound is $k \le \eps^{-2}$, and we control $\|A - \Ahat\|_{F[1]}$, so we exactly recover the original low-rank approximation theorem.

\item When we plug in $f(n) = k$, we exactly recover the top-$k$ low-rank approximation theorem in a similar way.
\end{itemize}

But these functions are all constant (independent of $n$), and the real flexibility of the theorem comes from choices of $f$ that depend on $n$. 
The point here is that one can choose $f$ to be an \emph{arbitrarily} fast-growing function, and therefore get an \emph{arbitrarily good} tradeoff between rank and the number of singular values controlled in the approximation, in exchange for blowing up the support size of $f$ and therefore losing a lot of control over the exact value of the rank.
Despite all this additional flexibility, the algorithm and proof for the Strong Low-Rank Approximation Theorem are still not really different from the one used for weak low-rank approximation:

\DontPrintSemicolon
%\begin{algorithmbox}
%\begin{minipage}{0.6\textwidth}
\begin{algorithm}[h]
\textbf{Input:} matrix $A$, error $\eps > 0$~\\~\\

$\widehat{A} \gets 0$\;

\While{there exists a rank-$f(\rank(\Ahat))$ projection $Q$ of $A - \widehat{A}$ with $\|Q\|_F > \eps \|A\|_F$}{
    $\widehat{A} \gets \widehat{A} + Q$\;
}
\Return{$\widehat{A}$}

\caption{\label{alg:slra} Algorithm for Strong Low-Rank Approximation (Theorem \ref{thm:slra})}
\end{algorithm}

Once again, it is immediate by definition of the norm $F[\cdot]$ that once the algorithm halts, it returns a matrix $\widehat{A}$ satisfying the claimed approximation guarantee.
We can also use exactly the same potential argument to conclude that the algorithm runs for $< \eps^{-2}$ rounds.
From there, the rank bound can be proved inductively: initially we have $\rank(\widehat{A})=0$, and then in each round $i \ge 1$, the rank becomes at most
\begin{align*}
&\rank(\widehat{A}) + f(\rank(\widehat{A}))\\
\le \ &f^{(i-1)} + f(f^{(i-1)}) \tag*{inductive hypothesis}\\
= \ &f^{(i)}.
\end{align*}

\section{Matrix-Based Regularity Lemmas}

\subsection{Cut Vectors, Matrices, and Norms}

Regularity lemmas are combinatorial in nature, and so they (usually) use the following discrete analogs of matrix rank.
A \emph{cut vector} is a nonzero vector in which all entries are either $0$ or $1$.
A \emph{cut matrix} is a special kind of rank-1 matrix that is a scalar multiple of an outer product of two cut vectors; in other words, it is a matrix that is constant on some $S \times T$ block and $0$ elsewhere.
The \emph{cutrank} of a matrix $A$ is the smallest integer $k$ for which it is possible to write
$$A = C_1 + \dots + C_k,$$
where each $C_i$ is a cut matrix.
In general, for an $m \times n$ matrix $A$, we have
$$0 \le \rank(A) \le \cutrank(A) \le mn.$$
Analogizing the Frobenius norm and its restrictions, we define:
$$\|A\|_{\blacksquare[k]} := \max \limits_{A' \text{ is a projection of $A$ onto a matrix of cutrank $\le k$}} \| A' \|_F.$$

We remark that there is nothing \emph{too} special about cut vectors in the following content: one could define analogous concepts with respect to any set of vectors $V$ (here $V$ is the set of cut vectors), and even obtain regularity lemmas analogous to the ones in this section with respect to any set of vectors $V$.
We focus on cut vectors only because they give the versions of matrix regularity lemmas that imply the classical graph-theoretic regularity lemmas, as demonstrated in the next section.

\subsection{Remark: Normalized Cut Norm vs.\ Cut Norm}

Most other resources on regularity lemmas will focus on the \emph{cut norm} $\|A\|_{\square}$ rather than our \emph{normalized cut norm} $\blacksquare[\cdot]$; this is the main design choice that distinguishes our approach from others you might find in the wild.
The cut norm (which we will not use at all) is defined as
$$\left\|A\right\|_{\square} := \max \limits_{v, w \text{ cut vectors}} \left| v^T A w \right|.$$
This is \textbf{not} the same as $\|A\|_{\blacksquare[1]}$, which is equivalent to a normalized version, $$\left\|A\right\|_{\blacksquare[1]} := \max \limits_{v, w \text{ cut vectors}} \left| \frac{v^T}{\|v\|_2} A \frac{w}{\|w\|_2} \right|.$$

There are a few reasons that we prefer the normalized cutnorm over the classical cutnorm.
One is that it is easier and more natural to define its higher-order version ($\blacksquare[k]$ rather than $\blacksquare[1]$).
Another is that it connects more closely with the classical low-rank approximation theorem.
Another is that it is a bit stronger: when $A \in \rr^{m \times n}$, we have
$$\|A\|_{\square} \le \sqrt{mn} \cdot \|A\|_{\blacksquare[1]}$$
and through this inequality, the controls we will prove on $\|A\|_{\blacksquare[\cdot]}$ imply the controls we would get on $\|A\|_{\square}$ if we centered that instead.
Finally, the norm $\blacksquare[\cdot]$ is arguably preferable to $\square$ from an algorithmic standpoint.
A famous algorithm by Alon and Naor \cite{AN04} achieves a constant-factor approximation to the cut norm $\square$ in polynomial time, which is good enough to prove regularity lemmas in the perspective that centers $\square$.
However, there is a simpler algorithm that \emph{exactly} computes $\blacksquare[1]$ in polynomial time (\cite{BV22}, c.f.\ Appendix A; this algorithm is based heavily on a previous one by Charikar \cite{Charikar00}).
It is an interesting open problem to compute $\blacksquare[k]$ in polynomial time for general $k$, although with a bit more effort than we will show in the following sections, it is possible to use only the algorithm for $\blacksquare[1]$ to get polynomial-time versions of our regularity lemmas \cite{BV22}.

\subsection{Regularity Lemmas}

The following theorem is known as the \emph{weak regularity lemma}, and was initially proved by Frieze and Kannan \cite{FK99}:

\begin{mytheorem}
\begin{theorem} [Weak Regularity Lemma \cite{FK99}] \label{thm:wlcra}
For any matrix $A$, $\eps > 0$, there is a matrix $\Ahat$ satisfying:
\begin{itemize}
\item (simplicity) $\Ahat$ has cutrank $< \eps^{-2}$, and
\item (approximating) $\|A - \Ahat\|_{\blacksquare[1]} \le \eps \|A\|_F$.
\end{itemize}
\end{theorem}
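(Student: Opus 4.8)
The plan is to run the exact same greedy strategy used for Weak Low-Rank Approximation (Theorem~\ref{thm:lra}), with ``rank-$1$ projection'' replaced throughout by ``cut matrix'' and ``rank'' replaced by ``cutrank.'' Concretely: initialize $\Ahat \gets 0$; while there is a cut matrix $C_0$ such that the projection $C$ of $A - \Ahat$ onto $\spann(C_0)$ satisfies $\|C\|_F > \eps\|A\|_F$, update $\Ahat \gets \Ahat + C$; then return $\Ahat$. The loop guard is, by the definition of $\blacksquare[1]$ as a maximum of $\|\cdot\|_F$ over projections onto matrices of cutrank $\le 1$, precisely the statement that $\|A - \Ahat\|_{\blacksquare[1]} > \eps\|A\|_F$. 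Hence the moment the algorithm terminates, the approximation guarantee $\|A - \Ahat\|_{\blacksquare[1]} \le \eps\|A\|_F$ holds automatically, and it only remains to bound the cutrank of $\Ahat$.

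For the cutrank bound, first observe that each round adds to $\Ahat$ a single matrix $C$, which is a nonzero scalar multiple of the cut matrix $C_0$ and is therefore itself a cut matrix; so after $i$ rounds $\cutrank(\Ahat) \le i$, and it suffices to show the loop runs fewer than $\eps^{-2}$ times. This is the same potential argument as in Theorem~\ref{thm:lra}, using $\|A - \Ahat\|_F^2$ as the potential: it starts at $\|A\|_F^2$, remains nonnegative because it is a squared norm, and in each round it drops by exactly $\|C\|_F^2 > \eps^2\|A\|_F^2$. The size of the drop comes from the Pythagorean identity $\|A - \Ahat - C\|_F^2 + \|C\|_F^2 = \|A - \Ahat\|_F^2$, valid because $C$ is the orthogonal projection of $A - \Ahat$ onto $\spann(C_0)$, so $A - \Ahat - C$ is orthogonal to $\spann(C_0)$ and in particular to $C$ under the Frobenius inner product. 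Combining these three facts forces termination in $< \eps^{-2}$ rounds, giving $\cutrank(\Ahat) < \eps^{-2}$.

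There is essentially no new obstacle here: the only thing to notice is that ``projection onto a matrix of cutrank $\le 1$'' means orthogonal projection onto the one-dimensional subspace spanned by a fixed cut matrix, and the Pythagorean step uses only that this is projection onto \emph{some} one-dimensional subspace — it is irrelevant whether that subspace is spanned by a cut matrix or by an arbitrary rank-$1$ matrix. So the entire argument of Theorem~\ref{thm:lra} transfers verbatim, and this theorem is just the $F \to \blacksquare$, $\rank \to \cutrank$ translation of the weak low-rank approximation theorem. (One could equally phrase it: apply the greedy algorithm of Theorem~\ref{thm:lra} but restrict the allowed rank-$1$ increments to cut matrices; correctness of the stopping condition and the potential bound are unaffected.)
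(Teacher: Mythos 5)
Your proof matches the paper's approach exactly: the paper proves this theorem (as the $f(n)=1$ special case of the strong regularity lemma, Algorithm~\ref{alg:svlra}) by running the same greedy potential argument as for weak low-rank approximation with cut matrices replacing rank-$1$ matrices, and your observation that the Pythagorean/potential step only needs orthogonal projection onto a one-dimensional subspace is precisely the point the paper makes when it says ``nothing significant changes when we use cutrank in place of rank.''
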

\end{mytheorem}

We can also have a top-$k$ version of this algorithm, which again follows as a corollary by choosing parameter $\eps / \sqrt{k}$:

\begin{mytheorem}
\begin{theorem} [Weak Top-$k$ Regularity Lemma]
For any matrix $A$, $\eps > 0$, and integer $k$, there is a matrix $\Ahat$ satisfying:
\begin{itemize}
\item (simplicity) $\Ahat$ has cutrank $< k\eps^{-2}$, and
\item (approximating) $\|A - \Ahat\|_{\blacksquare[k]} \le \eps \|A\|_F$.
\end{itemize}
\end{theorem}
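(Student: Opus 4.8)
The natural plan, mirroring how Theorem~\ref{thm:klra} was deduced from Theorem~\ref{thm:lra}, is to apply the Weak Regularity Lemma (Theorem~\ref{thm:wlcra}) with the rescaled error parameter $\eps' := \eps/\sqrt{k}$. This immediately produces a matrix $\Ahat$ of cutrank $< (\eps')^{-2} = k\eps^{-2}$, which is exactly the claimed simplicity bound, together with $\|A - \Ahat\|_{\blacksquare[1]} \le (\eps/\sqrt{k})\|A\|_F$. So the entire content of the corollary reduces to upgrading this control on the normalized cut norm to its top-$k$ version, i.e.\ to establishing the general inequality $\|B\|_{\blacksquare[k]} \le \sqrt{k}\,\|B\|_{\blacksquare[1]}$ for every matrix $B$ (to be applied with $B = A - \Ahat$).

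This inequality is the main obstacle, and it is where the cut setting genuinely differs from the Frobenius setting: in Theorem~\ref{thm:klra} the analogous step was free, since the residual's top-$k$ singular-value mass is at most $\sqrt{k}$ times its top singular value, but the normalized cut norm has no singular value decomposition to lean on. To attack it, I would unpack the definition: let $M$ be a matrix of cutrank $\le k$ realizing $\|B\|_{\blacksquare[k]} = \langle B, M\rangle_F/\|M\|_F$, and observe that we may take $M$ to be the orthogonal projection of $B$ onto $\spann(C_1,\dots,C_k)$ for cut matrices $C_i$, so that $\|B\|_{\blacksquare[k]}^2 = \|M\|_F^2 = \langle B, M\rangle_F$ with $M \in \spann(C_1,\dots,C_k)$. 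The point to exploit is that such an $M$ is simultaneously of rank $\le k$ and constant on the cells of the grid obtained by commonly refining the row-supports and column-supports of the $C_i$; viewing $M$ as a blow-up of a small value matrix of rank $\le k$ and combining the rank bound with this grid structure, one locates a single cut matrix $C$ that captures an $\Omega(1/\sqrt{k})$ fraction of $\|M\|_F$. Since $\langle B, C\rangle_F = \langle M, C\rangle_F$ when $C \in \spann(C_1,\dots,C_k)$ — more care is needed for $C$ outside this span — this forces $\|B\|_{\blacksquare[1]} \gtrsim \|M\|_F/\sqrt{k} = \|B\|_{\blacksquare[k]}/\sqrt{k}$, as desired; the delicate part is arranging the constant to be exactly $1$ rather than merely $\Omega(1)$.

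If one prefers to bypass that norm comparison, there is a self-contained alternative that parallels Algorithms~\ref{alg:vlra} and~\ref{alg:slra}: greedily build $\Ahat$ by repeatedly adding, as long as one exists, an orthogonal projection $Q$ of the current residual $A - \Ahat$ onto a matrix of cutrank $\le k$ with $\|Q\|_F > \eps\|A\|_F$. When this halts, $\|A - \Ahat\|_{\blacksquare[k]} \le \eps\|A\|_F$ is immediate from the definition of the norm. Termination follows from the same potential argument: $Q \perp (A - \Ahat - Q)$ gives $\|A - \Ahat - Q\|_F^2 = \|A - \Ahat\|_F^2 - \|Q\|_F^2$, so the nonnegative potential $\|A - \Ahat\|_F^2$, starting at $\|A\|_F^2$, drops by more than $\eps^2\|A\|_F^2$ each round, forcing $< \eps^{-2}$ rounds; and since cutrank is subadditive and each round adds a matrix of cutrank $\le k$, the final cutrank is $< k\eps^{-2}$.
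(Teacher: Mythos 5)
Your final paragraph is a complete and correct proof, and it is in fact the paper's intended argument: the paper derives the Weak Top-$k$ Regularity Lemma as the special case $f(n)=k$ of Theorem~\ref{thm:slcra} (it says so explicitly, just after Algorithm~\ref{alg:svlra}: ``the weak top-$k$ regularity lemma is the special case $f(n)=k$''). Your greedy loop with cutrank-$\le k$ projection steps is exactly Algorithm~\ref{alg:svlra} with constant $f\equiv k$, your potential argument for $<\eps^{-2}$ rounds is the paper's, and subadditivity of cutrank gives the $<k\eps^{-2}$ bound. So the proposal is correct and matches the paper.

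Your first two paragraphs identify a real subtlety worth flagging. The paper's one-line remark that the top-$k$ version ``follows as a corollary by choosing parameter $\eps/\sqrt{k}$'' echoes the Frobenius-case deduction of Theorem~\ref{thm:klra} from Theorem~\ref{thm:lra}, but that deduction leaned on the SVD: once $\|B\|_{F[1]}\le\delta$, every singular value of $B$ is $\le\delta$, so the top-$k$ mass is $\le\sqrt{k}\,\delta$ by orthonormality of singular vectors. You are right that there is no such free lunch for $\blacksquare$: the needed inequality $\|B\|_{\blacksquare[k]}\le\sqrt{k}\,\|B\|_{\blacksquare[1]}$ does not follow from Cauchy--Schwarz in the same way, because the cut matrices spanning the extremal subspace need not be close to orthonormal. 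Your sketch of that inequality (locating a single cut matrix capturing an $\Omega(1/\sqrt{k})$ fraction of the projected mass, via the grid/rank structure of $M$) is not a proof as written --- the ``locate a cut matrix'' step is unjustified, and you correctly note both that the constant is unpinned and that cut matrices outside $\spann(C_1,\dots,C_k)$ require more care. The good news is that none of this is needed: the greedy argument in your last paragraph sidesteps the norm comparison entirely and is the clean route, which is exactly how the paper actually proves the statement.
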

\end{mytheorem}

We can also extend this into a strong version in the same way as before:

\begin{mytheorem}
\begin{theorem} [Strong Regularity Lemma \cite{S75, AFKS00}] \label{thm:slcra}
For any matrix $A, \eps > 0$, and function $f : \nn \to \nn$, \textbf{there exists} an integer $k \le f^{(\eps^{-2})}$
and a matrix $\Ahat$ satisfying:
\begin{itemize}
\item (simplicity) $\Ahat$ has cutrank $\le k$, and
\item (approximating) $\|A - \Ahat\|_{\blacksquare[f(k)]} \le \eps \|A\|_F$.
\end{itemize}
\end{theorem}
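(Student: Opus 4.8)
The plan is to mimic the proof of Strong Low-Rank Approximation (Theorem~\ref{thm:slra}) verbatim, but replace every occurrence of ``rank'' with ``cutrank'' and every rank-$1$ projection with a cut-matrix projection. Concretely, I would run the greedy algorithm that maintains a running approximation $\widehat{A}$, initialized to $0$, and in each round searches for a projection $Q$ of $A - \widehat{A}$ onto a matrix of cutrank $\le f(\cutrank(\widehat{A}))$ with $\|Q\|_F > \eps \|A\|_F$; if such a $Q$ exists, update $\widehat{A} \gets \widehat{A} + Q$, otherwise halt and return $\widehat{A}$.

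\emph{Correctness of the approximation guarantee} is immediate from the definition of $\|\cdot\|_{\blacksquare[f(k)]}$: when the loop halts, there is no cutrank-$\le f(\cutrank(\widehat{A}))$ projection of $A - \widehat{A}$ with Frobenius norm exceeding $\eps\|A\|_F$, which is exactly the statement that $\|A - \widehat{A}\|_{\blacksquare[f(k)]} \le \eps\|A\|_F$ for $k = \cutrank(\widehat{A})$. \emph{The iteration bound} uses the same potential function $\|A - \widehat{A}\|_F^2$ as before: it starts at $\|A\|_F^2$, stays nonnegative, and — since $Q$ is a Frobenius projection of $A - \widehat{A}$, hence orthogonal to $A - \widehat{A} - Q$ — the Pythagorean identity gives that it drops by $\|Q\|_F^2 > \eps^2\|A\|_F^2$ each round. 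Hence the loop runs for fewer than $\eps^{-2}$ rounds. \emph{The cutrank bound} follows by the same induction as in Theorem~\ref{thm:slra}: cutrank is subadditive ($\cutrank(X+Y) \le \cutrank(X) + \cutrank(Y)$, since concatenating cut-matrix decompositions of $X$ and $Y$ yields one for $X+Y$), so after round $i$ we have $\cutrank(\widehat{A}) \le f^{(i-1)} + f(f^{(i-1)}) = f^{(i)}$ using that $f$ is nondecreasing; after the final round this is at most $f^{(\eps^{-2})}$.

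\emph{The one point requiring care} — and the main place the cut setting differs from the rank setting — is ensuring the greedy step is well-posed: the family of matrices of cutrank $\le r$ is not a linear subspace, so ``the'' projection of $A - \widehat{A}$ onto it is not unique and need not be realizable as a nested sum of cut matrices. But the algorithm only needs \emph{some} projection $Q$ onto \emph{some} cutrank-$\le r$ matrix with large Frobenius norm, and adding $Q$ to $\widehat{A}$ keeps the total cutrank bounded by subadditivity regardless of how the old and new cut-matrix decompositions relate; so the argument goes through. (In fact one can always take $Q$ itself to be a scalar multiple of a single cut matrix realizing the maximum in the $\blacksquare[1]$-style inner optimization when $r \ge 1$, but this refinement is not needed.) With this observation, all three bullets of Theorem~\ref{thm:slra}'s proof transfer directly, giving the claimed $k \le f^{(\eps^{-2})}$, $\cutrank(\widehat{A}) \le k$, and $\|A - \widehat{A}\|_{\blacksquare[f(k)]} \le \eps\|A\|_F$.
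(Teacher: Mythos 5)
Your proof is correct and matches the paper's approach exactly: the paper reuses the same greedy algorithm with cutrank in place of rank, the same potential argument bounding the number of rounds by $\eps^{-2}$, and the same induction giving the cutrank bound $f^{(i)}$ after $i$ rounds. Your added remarks --- making explicit the subadditivity of cutrank and observing that the greedy step only needs \emph{some} large projection onto \emph{some} low-cutrank matrix rather than a canonical one --- are details the paper leaves implicit, but they do not change the argument.
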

\end{mytheorem}

There is essentially no difference between the proofs of these theorems and the proofs of the low-rank approximation theorems from the previous section.
The construction algorithm is the same, but with cutrank in place of rank:

\DontPrintSemicolon
%\begin{algorithmbox}
%\begin{minipage}{0.6\textwidth}
\begin{algorithm}[h]
\textbf{Input:} matrix $A$, error $\eps > 0$~\\~\\

$\widehat{A} \gets 0$\;

\While{there exists a cutrank-$f(\cutrank(\widehat{A}))$ projection $Q$ of $A - \widehat{A}$ with $\|Q\|_F > \eps \|A\|_F$}{
    $\widehat{A} \gets \widehat{A} + Q$\;
}
\Return{$\widehat{A}$}

\caption{\label{alg:svlra} Algorithm for Strong Regularity Lemma (Theorem \ref{thm:slcra})}
\end{algorithm}

The proof is also the same as before; nothing significant changes when we use cutrank in place of rank.
We again use $\|A - \Ahat\|_F^2$ as a potential function to argue that the algorithm repeats for only $\eps^{-2}$ many rounds, and again, we have that the cutrank of $\Ahat$ is initially $0$ and then becomes at most $\cutrank(\Ahat) +f(\cutrank(\Ahat))$ in each round, leading to the claimed bound on the final cutrank.
The weak regularity lemma is the special case $f(n) = 1$, and the weak top-$k$ regularity lemma is the special case $f(n) = k$.

\section{Graph-Based Regularity Lemmas \label{sec:graphs}}

Our next task will be to derive the common graph-theoretic phrasings of regularity lemmas as corollaries of the linear-algebraic ones above.
The proofs are basically just applications of Theorem \ref{thm:slcra} plus some algebraic rearrangement, with no major new technical ideas.
In some sense the \emph{point} of these notes is to encourage you to think of Theorem \ref{thm:slcra} as the ``real'' regularity lemma, and the following theorems in this section as downstream graph-theoretic applications.

\subsection{Common Refinements}

In the following, we will discuss the \emph{common refinement} of matrices $\Ahat \in \rr^{n \times n}$, indexed by a vertex set $V$.
This is the partition $V = V_1 \cup \dots \cup V_k$ defined as follows.
Let
$$\Ahat = C_1 + \dots + C_{k'}$$
be a minimal decomposition into cut matrices (so $\cutrank(\Ahat) = k'$).
Each of these cut matrices $C_i$ is nonzero on an $S_i \times T_i$ block, and these sets $\{S_i\} \cup \{T_i\}$ give $2k$ separate bipartitions of $V$.
The common refinement is defined as the maximal partition that refines each of these bipartitions; that is, it puts two nodes $u, v \in V$ into the same part iff they lie on the same side of all $2k$ bipartitions.
The size of a common refinement is bounded by
$$k \le 2^{2k} = 4^k,$$
since each of the $2k$ sets can at most double the number of parts in the common refinement.
Also note that $\Ahat$ is constant on each $V_i \times V_j$ block, since by construction each cut matrix $C_i$ contains either all or none of $V_i \times V_j$ in its associated $S_i \times T_i$ block.

\subsection{A Useful Property of the Frobenius Inner Product \label{sec:frobswitch}}

Several times in the following, we will invoke a technical property of the Frobenius norm/inner product, which we recap here without proof.
Let $s, t \in \rr^n$ and let $A \in \rr^{n \times n}$.
Then we have the identity
\begin{align*}
\frac{s^T}{\|s\|_2} A \frac{t}{\|t\|_2} &= \left\langle \frac{s t^T}{\|s\|_2 \|t\|_2}, A \right\rangle_F.
\end{align*}
Additionally, we have
$$\left\| \frac{s t^T}{\|s\|_2 \|t\|_2} \right\|_F = 1,$$
and therefore the right-hand side of the identity measures the magnitude (in Frobenius norm) of the projection of $A$ onto the matrix $st^T$.
% In particular, that means we have
% $$\frac{s^T}{\|s\|_2} A \frac{t}{\|t\|_2} \le \|A\|_{F[1]},$$
% and in the case where $s, t$ are cut vectors, the matrix $st^T$ has cutrank $1$ and so we have
% $$\frac{s^T}{\|s\|_2} A \frac{t}{\|t\|_2} \le \|A\|_{\blacksquare[1]}.$$

\subsection{The Weak Regularity Lemma}

The weak regularity lemma was originally proved by Frieze and Kannan \cite{FK99}.
For intuition on the statement, imagine that we have an $n$-node graph $G$, and we want to compress it into small space using two steps:
\begin{itemize}
\item Compute a partition of its vertices into a small number of parts,
\item Record a single real number $c_{ij}$ for each pair of parts $(V_i, V_j)$ in the partition, which roughly represents a claim that about a $c_{ij}$ fraction of the possible edges in $V_i \times V_j$ are present in $G$.
\end{itemize}
The goal of the compression is, for arbitrary vertex subsets $S, T$, to be able to accurately estimate the number of edges $e(S, T)$ going between these subsets in $G$.
More specifically, the estimate $\widehat{e(S, T)}$ associated to the compression is computed by checking the sizes of the intersections of $S, T$ with each pair of vertex subsets $V_i, V_j$, and then estimating that a $c_{ij}$ fraction of the edges between these intersections $S \cap V_i, T \cap V_j$ are present in $G$.
The weak regularity lemma guarantees a compression that will achieve additive error $\eps n^2$.

\begin{mytheorem}
\begin{theorem} [Weak Regularity Lemma, Graph Phrasing \cite{FK99}] \label{thm:weakreggraph}
For any $n$-node graph $G = (V, E)$ and any $\eps > 0$, there exists a partition
$$V = V_1 \cup \dots \cup V_k$$
of size $k \le 4^{\eps^{-2}}$, and a real number $c_{ij}$ for each $1 \le i, j \le k$, such that the following holds for any $S, T \subseteq V$.
Letting
$$\widehat{e(S, T)} := \sum \limits_{i, j=1}^k c_{ij} \left|V_i \cap S\right|\left|V_j \cap T\right|,$$
we have
$$\left| e(S, T) - \widehat{e(S, T)}\right| < \eps n^2.$$ 
\end{theorem}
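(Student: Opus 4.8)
The plan is to apply the linear-algebraic weak regularity lemma (Theorem \ref{thm:wlcra}) to the adjacency matrix $A \in \{0,1\}^{n \times n}$ of $G$, extract the vertex partition as the common refinement of the resulting sketch $\widehat A$, and read the numbers $c_{ij}$ off the blocks of $\widehat A$.

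First I would record the two normalization facts that make the error bound come out right. Since $A$ is a $0/1$ matrix with zero diagonal, $\|A\|_F^2 = \sum_{u,v} A_{uv}^2 = 2|E| \le n^2 - n$, so $\|A\|_F < n$ for $n \ge 1$ (and the claim is vacuous for $n=0$); and for any $S \subseteq V$ the indicator vector $\onevec_S$ has $\|\onevec_S\|_2 = \sqrt{|S|} \le \sqrt n$. Now apply Theorem \ref{thm:wlcra} to $A$ with parameter $\eps$, obtaining $\widehat A$ with $\cutrank(\widehat A) < \eps^{-2}$ and $\|A - \widehat A\|_{\blacksquare[1]} \le \eps\|A\|_F$. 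Let $V = V_1 \cup \dots \cup V_k$ be the common refinement of $\widehat A$; by the common-refinement size bound, $k \le 4^{\cutrank(\widehat A)} \le 4^{\eps^{-2}}$, and $\widehat A$ is constant on each block $V_i \times V_j$, so we may define $c_{ij}$ to be that constant.

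Next I would verify the two bookkeeping identities $e(S,T) = \onevec_S^T A\, \onevec_T$ and $\widehat{e(S,T)} = \onevec_S^T \widehat A\, \onevec_T$. The first is just the definition of the adjacency matrix. For the second, summing the entries of $\widehat A$ over the ordered pairs in $S \times T$ and grouping according to which part $V_i$ (resp.\ $V_j$) each coordinate lies in gives exactly $\sum_{i,j} c_{ij}\, |V_i \cap S|\,|V_j \cap T|$, since $\widehat A \equiv c_{ij}$ on $V_i \times V_j$. Subtracting, $e(S,T) - \widehat{e(S,T)} = \onevec_S^T (A - \widehat A)\, \onevec_T$.

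Finally I would pass to the normalized cut norm: for nonempty $S,T$, the Frobenius-switch identity of Section \ref{sec:frobswitch} identifies $\tfrac{\onevec_S^T}{\|\onevec_S\|_2}(A - \widehat A)\tfrac{\onevec_T}{\|\onevec_T\|_2}$ with the (signed) magnitude of the projection of $A - \widehat A$ onto the cut matrix $\onevec_S \onevec_T^T$, hence its absolute value is at most $\|A - \widehat A\|_{\blacksquare[1]}$; and the whole estimate is trivial if $S$ or $T$ is empty. Therefore
$$\bigl| e(S,T) - \widehat{e(S,T)} \bigr| \le \|\onevec_S\|_2\, \|\onevec_T\|_2 \cdot \|A - \widehat A\|_{\blacksquare[1]} \le \sqrt{|S|\,|T|}\cdot \eps\|A\|_F < n \cdot \eps \cdot n = \eps n^2,$$
which is the claim. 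There is no genuine obstacle here; the only points needing a little care are the strictness of the last inequality (which is why I track $\|A\|_F < n$ rather than $\le n$) and keeping the common-refinement size bound in sync with the cutrank guarantee of Theorem \ref{thm:wlcra} so that $k \le 4^{\eps^{-2}}$.
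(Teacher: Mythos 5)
Your proof is correct and takes essentially the same route as the paper: apply the linear-algebraic weak regularity lemma (Theorem \ref{thm:wlcra}) to the adjacency matrix, read the partition off the common refinement of $\widehat A$ and the constants $c_{ij}$ off its blocks, and pass $|e(S,T) - \widehat{e(S,T)}| = |\onevec_S^T (A - \widehat A) \onevec_T|$ through the Frobenius-switch identity to land on $\|A - \widehat A\|_{\blacksquare[1]}$. One small point in your favor: the paper's chain of inequalities uses a strict ``$<$'' and attributes it to Theorem \ref{thm:wlcra}, which only guarantees ``$\le$'', whereas you obtain the required strict inequality cleanly from the observation $\|A\|_F < n$ (the adjacency matrix has zero diagonal).
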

\end{mytheorem}

To prove this, let $A$ be the adjacency matrix of $G$, and apply Theorem \ref{thm:wlcra} (or equivalently, Theorem \ref{thm:slcra} with function $f(n)=1$), to generate an approximating matrix $\Ahat$.
Let $V = V_1 \cup \dots \cup V_k$ be the common refinement of $\Ahat$, which thus has size
$$k \le 4^{\cutrank(\Ahat)} \le 4^{\eps^{-2}}.$$
Since the $\Ahat$ is constant on each $V_i \times V_j$ block, we may let $c_{ij}$ be the common value of all entries in $\Ahat[V_i \times V_j]$.

Now let $S, T \subseteq V$ be arbitrary vertex subsets, and let $s, t \in \rr^n$ be their respective binary indicator vectors.
We then have
\begin{align*}
\left| e(S, T) - \widehat{e(S, T)} \right|&= \left| s^T A t - s^T \Ahat t \right|\\
&= \underbrace{\|s\|_2 \|t\|_2}_{\text{left term}} \cdot \underbrace{\left| \frac{s^T}{\|s\|_2} \left(A - \Ahat \right) \frac{t}{\|t\|_2} \right|}_{\text{right term}}.
\end{align*}

In order to continue simplifying, we inspect the two terms in this product in turn.
For the left term, since $s, t$ are binary vectors in $\rr^n$, they each have norm at most $\sqrt{n}$, and so $\|s\|_2 \|t\|_2 \le n$.
For the right term, by the identity in Section \ref{sec:frobswitch}, this is magnitude of the projection of $A - \widehat{A}$ onto the matrix $st^T$.
Since $st^T$ is a cut matrix, this is by definition at most $\|A - \Ahat\|_{\blacksquare[1]}$.
So we continue:
\begin{align*}
\left| e(S, T) - \widehat{e(S, T)} \right|&\le \underbrace{n}_{\text{bound on left term}} \cdot \underbrace{\left\|A - \Ahat\right\|_{\blacksquare[1]}}_{\text{bound on right term}}\\
&< n \cdot \left(\eps \left\| A \right\|_F\right) \tag*{by Theorem \ref{thm:wlcra}}\\
&\le n \cdot \left(\eps n\right) \tag*{since $A$ is a binary $n \times n$ matrix}\\
&= \eps n^2.
\end{align*}

\subsection{Discrepancy}

Graph versions of regularity lemmas are often phrased using the concept of \emph{discrepancy}, defined as follows.
Given vertex subsets $V_i, V_j$ in a graph $G = (V, E)$, their \emph{discrepancy} is defined as the quantity
$$ \disc(V_i, V_j) := \min_{c_{ij}} \max \limits_{S \subseteq V_i, T \subseteq V_j} \left| e(S, T) - c_{ij} |S||T| \right|,$$
where $e(S, T)$ counts the number of edges with one endpoint in $S$ and the other in $T$.
The intuition for this definition is similar to that for the weak regularity lemma.
Imagine that we try to compress the part of the graph between $V_i$ and $V_j$ down to just a single real number $c_{ij}$, roughly representing a claim that for any node subsets $S \subseteq V_i, T \subseteq V_j$, about a a $c_{ij}$ fraction of the possible $S$-$T$ edges are present in $G$.
The discrepancy is a measure of the accuracy of that estimate; namely, it is the largest number of edges on which this estimate ever differs from the correct cut value.
One would expect the discrepancy of a random graph of density $c_{ij}$ to be fairly small, and therefore discrepancy is sometimes considered a measure of the ``pseudorandomness'' of $G$ between $V_i, V_j$. 

Discrepancy can be related to the $\blacksquare$ matrix norms using the following technical lemma.
The proof is a bit dense, but it does not contain a major new concept; it boils down to unpacking the definitions and applying Cauchy-Schwarz.

\begin{lemma} [Relationship Between Discrepancy and $\blacksquare$ Norm] \label{lem:discsquare}
Let $G = (V, E)$ be an $n$-node graph with adjacency matrix $A \in \rr^{n \times n}$.
Let $\{V_1, \dots, V_k\}$ be disjoint vertex subsets and let $\Ahat \in \rr^{n \times n}$ be a matrix that is constant on its $V_i \times V_j$ block for any two of these subsets $V_i, V_j$.
Then
$$\sum \limits_{i,j=1}^k \disc(V_i, V_j) \le n \cdot \left\| A - \widehat{A} \right\|_{\blacksquare[k^2]}.$$
\end{lemma}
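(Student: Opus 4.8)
The plan is to control each term $\disc(V_i,V_j)$ separately and then sum. Fix a pair $(i,j)$, and let $c_{ij}$ be the common value of $\Ahat$ on the block $V_i \times V_j$. For the discrepancy we are free to use any choice of the ``guess'' constant; choosing exactly this $c_{ij}$, we get
$$\disc(V_i, V_j) \le \max_{S \subseteq V_i,\, T \subseteq V_j} \left| e(S,T) - c_{ij}|S||T| \right|.$$
Now fix the maximizing $S \subseteq V_i$, $T \subseteq V_j$ with indicator vectors $s,t \in \rr^n$. Since $\Ahat$ is constant (equal to $c_{ij}$) on $V_i \times V_j$ and $s,t$ are supported inside $V_i, V_j$ respectively, we have $s^T \Ahat t = c_{ij}|S||T|$, while $s^T A t = e(S,T)$. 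So the term inside the max is exactly $\left| s^T (A - \Ahat) t \right|$. Writing this as $\|s\|_2\|t\|_2 \cdot \left| \tfrac{s^T}{\|s\|_2}(A-\Ahat)\tfrac{t}{\|t\|_2} \right|$ and using the identity of Section \ref{sec:frobswitch}, the second factor is the magnitude of the projection of $A - \Ahat$ onto the rank-one (in fact cut) matrix $st^T / (\|s\|_2\|t\|_2)$.

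The key step is to package the $k^2$ individual cut matrices $s_i t_j^T$ — one per pair $(i,j)$, where $s_i, t_j$ are the optimizers for that pair — into a single matrix of cutrank $\le k^2$ and relate the sum of the individual projection magnitudes to $\|A - \Ahat\|_{\blacksquare[k^2]}$. Here is where Cauchy--Schwarz enters. Let $B := A - \Ahat$ and for each pair let $M_{ij} := s_i t_j^T / (\|s_i\|_2\|t_j\|_2)$, a cut matrix of Frobenius norm $1$. Crucially the supports $V_i \times V_j$ are pairwise disjoint (the $V_\ell$ are disjoint), so the $M_{ij}$ are pairwise Frobenius-orthogonal. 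Let $p_{ij} := \langle B, M_{ij}\rangle_F$ be the signed projection coefficients, and consider $M := \sum_{i,j} \mathrm{sign}(p_{ij})\, M_{ij}$, which has cutrank $\le k^2$; after normalizing, $M / \|M\|_F$ is a candidate in the max defining $\|B\|_{\blacksquare[k^2]}$, and the projection of $B$ onto it has Frobenius magnitude $\big(\sum_{i,j} |p_{ij}|\big) / \|M\|_F$. Since the $M_{ij}$ are orthonormal, $\|M\|_F = \sqrt{k^2} = k$ (or smaller if some $p_{ij}=0$, in which case one drops those terms and the bound only improves), hence $\sum_{i,j} |p_{ij}| \le k \cdot \|B\|_{\blacksquare[k^2]}$.

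Combining the pieces: $\disc(V_i,V_j) \le \|s_i\|_2\|t_j\|_2 \cdot |p_{ij}| \le n \cdot |p_{ij}|$, since $s_i, t_j$ are binary vectors in $\rr^n$ so $\|s_i\|_2\|t_j\|_2 \le n$. Summing over all $k^2$ pairs,
$$\sum_{i,j=1}^k \disc(V_i,V_j) \le n \sum_{i,j=1}^k |p_{ij}| \le n \cdot k \cdot \|A - \Ahat\|_{\blacksquare[k^2]},$$
which is slightly stronger than (and in particular implies) the claimed bound, since $k \le k^2$ — wait, that goes the wrong way; one should double-check the exponent. The honest bound the argument gives is $n k \|B\|_{\blacksquare[k^2]}$; to land exactly at $n \|B\|_{\blacksquare[k^2]}$ one instead uses $\|s_i\|_2 \|t_j\|_2 \le |V_i|^{1/2}|V_j|^{1/2}$ together with $\sum_{i,j}\sqrt{|V_i||V_j|} \cdot |p_{ij}| \le \big(\sum_{i,j}|V_i||V_j|\big)^{1/2}\big(\sum_{i,j} p_{ij}^2\big)^{1/2} \le n \cdot \|B\|_{\blacksquare[k^2]}$ by Cauchy--Schwarz, using $\sum_i |V_i| \le n$ and that $\big(\sum p_{ij}^2\big)^{1/2}$ is the Frobenius norm of the projection of $B$ onto $\mathrm{span}\{M_{ij}\}$, a space spanned by a cutrank-$k^2$ family. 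The main obstacle is getting this last Cauchy--Schwarz bookkeeping exactly right — in particular making sure the normalization constants $\|s_i\|_2\|t_j\|_2$ are absorbed cleanly and that the orthogonal family $\{M_{ij}\}$ genuinely certifies the $\blacksquare[k^2]$ norm — but there is no new conceptual ingredient beyond unpacking definitions and Cauchy--Schwarz, as the lemma's preamble promises.
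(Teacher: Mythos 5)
Your final argument is correct and matches the paper's proof essentially step for step: you bound each $\disc(V_i,V_j)$ by evaluating at the block constant $c_{ij}$, express the resulting term as $\|s_{ij}\|_2\|t_{ij}\|_2\cdot|p_{ij}|$ where $p_{ij}$ is the Frobenius projection coefficient onto the normalized cut matrix $M_{ij}$, and apply Cauchy--Schwarz to $\sum_{i,j}\sqrt{|V_i||V_j|}\,|p_{ij}|$, using disjointness of the blocks $V_i\times V_j$ to see that $\{M_{ij}\}$ is orthonormal and hence $\bigl(\sum_{i,j} p_{ij}^2\bigr)^{1/2}\le\|A-\Ahat\|_{\blacksquare[k^2]}$. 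The sign-matrix detour that gave the lossy $nk\|A-\Ahat\|_{\blacksquare[k^2]}$ bound was, as you noticed mid-argument, a misstep; the corrected Cauchy--Schwarz bookkeeping in your last paragraph is precisely what the paper does.
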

\begin{proof}
For any $i, j$, let $c_{ij}$ denote the value of all entries in $\Ahat[V_i, V_j]$.
Also let $S_{ij} \subseteq V_i, T_{ij} \subseteq V_j$ be the choice of vertex subsets maximizing
$$\left| e(S, T) - c_{ij} |S||T| \right|$$
over all $S \subseteq V_i, T \subseteq V_j$, and let $s_{ij}, t_{ij} \in \rr^n$ be the binary indicator vectors for $S_{ij}, T_{ij}$ respectively.
We can now bound the sum of discrepancies using the following algebra:
\begin{align*}
\sum \limits_{i, j=1}^k \disc(V_i, V_j) &\le \sum \limits_{i, j=1}^k \left| e(S_{ij}, T_{ij}) - c_{ij} \left|S_{ij}\right|\left|T_{ij}\right| \right| \tag*{definition of $\disc$}\\
&\le \sum \limits_{i, j=1}^k \left| s_{ij}^T A t_{ij} - s_{ij}^T \widehat{A} t_{ij} \right| \\
&= \sum \limits_{i, j=1}^k \left( \|s_{ij}\|_2 \|t_{ij}\|_2 \right) \cdot \left| \frac{s_{ij}^T}{\|s_{ij}\|_2}  \left(A - \widehat{A}\right) \frac{t_{ij}}{\|t_{ij}\|_2} \right| \\
&\le \underbrace{\left( \sum \limits_{i, j=1}^k \|s_{ij}\|_2^2 \|t_{ij}\|_2^2 \right)^{1/2}}_{\text{left term}} \cdot \underbrace{\left( \sum \limits_{i, j=1}^k \left( \frac{s_{ij}^T}{\|s_{ij}\|_2} \left(A - \widehat{A}\right) \frac{t_{ij}}{\|t_{ij}\|_2} \right)^2\right)^{1/2}}_{\text{right term}}
%\\ &= \left( n^2 \right)^{1/2} \cdot \left( \sum \limits_{i=1}^k \sum \limits_{j=1}^k \left\langle \frac{s_{ij} t_{ij}^T}{\|s_{ij}\|_2 \|t_{ij}\|_2}, A - \widehat{A} \right\rangle^2\right)^{1/2}.
\end{align*}
where the last step is by the Cauchy-Schwarz inequality.
In order to continue simplifying, we will inspect the two terms in this product in turn.

\paragraph{Left Term.}
On the left: the terms $\|s_{ij}\|_2^2, \|t_{ij}\|_2^2$ are simply $|S_{ij}|$ and $|T_{ij}|$ respectively, which are no larger than their respective host sets $|V_i|$ and $|V_j|$.
So we may simplify
\begin{align*}
\left( \sum \limits_{i, j=1}^k \|s_{ij}\|_2^2 \|t_{ij}\|_2^2 \right)^{1/2} &\le \left( \sum \limits_{i, j=1}^k |V_i||V_j| \right)^{1/2}\\
&= (n^2)^{1/2} \tag*{since $\{V_i\}$ are disjoint and $|V|=n$}\\
&= n.
\end{align*}

\paragraph{Right Term.}
By the discussion in Section \ref{sec:frobswitch}, the expression
\begin{align*}
\frac{s_{ij}^T}{\|s_{ij}\|_2} \left(A - \widehat{A}\right) \frac{t_{ij}}{\|t_{ij}\|_2}
\end{align*}
is the $\ell^2$ norm of the magnitude of the projections of $A - \widehat{A}$ onto the set of matrices
$$\left\{\frac{s_{ij} t_{ij}^T}{\|s_{ij}\|_2 \|t_{ij}\|_2}\right\}.$$
These are in fact orthonormal cut matrices; that is, they are pairwise orthogonal under Frobenius inner product, since their nonzero entries are disjoint.
So the $\ell^2$ norm of the projections of $A - \widehat{A}$ is in fact the magnitude of the projection of $A - \widehat{A}$ onto
$$\texttt{span}\left\{\frac{s_{ij} t_{ij}^T}{\|s_{ij}\|_2 \|t_{ij}\|_2}\right\}.$$
Since there are $k^2$ orthonormal cut matrices in this set, by definition all matrices in the span have cutrank at most $k^2$, and so this term is at most
$\left\| A - \widehat{A} \right\|_{\blacksquare[k^2]}$.

\paragraph{Completing the Calculation.}

Putting the two terms together, we have
\begin{align*}
\sum \limits_{i, j=1}^k \disc(V_i, V_j) &\le \underbrace{n}_{\text{bound on left term}} \cdot \underbrace{\left\| A - \widehat{A} \right\|_{\blacksquare[k^2]}}_{\text{bound on right term}},
\end{align*}
as claimed.

\end{proof}

\subsection{\Szemeredi{} Regularity Lemma, Discrepancy Version}

We will next prove the \Szemeredi{} regularity lemma, unsurprisingly originally proved by \Szemeredi{} \cite{S75}.
There are at least two popular phrasings; the following is the simpler of the two to state and prove, but has slightly weaker guarantees.
Recall that
$$\tower(k) = \underbrace{2^{2^{2^{\dots ^2}}}}_{k \text{ times}}.$$

\begin{mytheorem}
\begin{theorem} [\Szemeredi{} Regularity Lemma, Graph Phrasing, Discrepancy Version] \label{thm:regdisc}
For any $n$-node graph $G = (V, E)$ and any $\eps > 0$, there exists an integer $k$ bounded by
$$k \le \tower(O(\eps^{-2}))$$
and a partition $V = V_1 \cup \dots \cup V_k$ such that
$$\sum \limits_{i, j=1}^k \disc(V_i, V_j) \le \eps n^2.$$ 
\end{theorem}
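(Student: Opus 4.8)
The plan is to obtain this theorem by combining the Strong Regularity Lemma (Theorem~\ref{thm:slcra}) with the discrepancy bound of Lemma~\ref{lem:discsquare}, choosing the growth function $f$ fast enough to swallow the cutrank blowup caused by passing to a common refinement. This is precisely the regime of exponentially growing $f$ alluded to earlier as the \Szemeredi{} choice $f(n) = C^n$.

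First I would let $A \in \rr^{n \times n}$ be the adjacency matrix of $G$ and apply Theorem~\ref{thm:slcra} with error parameter $\eps$ and the function $f(n) = 16^n$. This yields an integer $k_0 \le f^{(\eps^{-2})}$ and a matrix $\Ahat$ with $\cutrank(\Ahat) \le k_0$ and $\|A - \Ahat\|_{\blacksquare[f(k_0)]} \le \eps\|A\|_F$. Take $V = V_1 \cup \dots \cup V_k$ to be the common refinement of $\Ahat$; recall this is an honest partition of size $k \le 4^{\cutrank(\Ahat)} \le 4^{k_0}$ on whose blocks $V_i \times V_j$ the matrix $\Ahat$ is constant. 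The point of the choice $f(n) = 16^n$ is that now $k^2 \le (4^{k_0})^2 = 16^{k_0} = f(k_0)$; since $\|\cdot\|_{\blacksquare[m]}$ is nondecreasing in $m$ (its defining maximum ranges over a larger family of projections as $m$ grows), this gives $\|A - \Ahat\|_{\blacksquare[k^2]} \le \|A - \Ahat\|_{\blacksquare[f(k_0)]} \le \eps\|A\|_F$.

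Next I would feed $V_1, \dots, V_k$ and $\Ahat$ into Lemma~\ref{lem:discsquare} — whose hypotheses hold, since the parts are disjoint and $\Ahat$ is block-constant — and chain the inequalities
\[
  \sum_{i,j=1}^{k} \disc(V_i, V_j) \;\le\; n \cdot \|A - \Ahat\|_{\blacksquare[k^2]} \;\le\; n \cdot \eps\|A\|_F \;\le\; \eps n^2,
\]
where the final step uses that $A$ is a $0/1$ matrix of order $n$, so $\|A\|_F \le n$. This is exactly the claimed discrepancy bound, so only the size bound on $k$ remains.

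The bound on $k$ is the one spot that needs a genuine (if routine) estimate rather than bookkeeping. Unwinding the recursion for $f(n) = 16^n$ gives $f^{(0)} = 0$, $f^{(1)} = 1$, $f^{(2)} = 17$, $f^{(3)} = 17 + 16^{17}$, and in general each step exponentiates, so $f^{(i)}$ is bounded by a power tower of height $i + O(1)$; in particular $k_0 \le f^{(\eps^{-2})} \le \tower(O(\eps^{-2}))$, and then $k \le 4^{k_0} \le \tower(O(\eps^{-2}))$, since prepending one more exponentiation to a tower only changes the constant hidden in the $O(\cdot)$. I expect the main (mild) obstacle to be making this last claim precise — formally, an induction showing $f^{(i)} \le \tower(ci)$ for a suitable absolute constant $c$ — but there is no conceptual difficulty: the real content of the theorem is carried entirely by Theorem~\ref{thm:slcra} and Lemma~\ref{lem:discsquare}, and the role of this proof is just to bolt them together with the right $f$.
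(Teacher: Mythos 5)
Your proposal matches the paper's proof essentially verbatim: apply Theorem~\ref{thm:slcra} with $f(n) = 16^n$, take the common refinement (of size $k \le 4^{k_0}$), observe $k^2 \le 16^{k_0} = f(k_0)$ so the approximation guarantee transfers, and then invoke Lemma~\ref{lem:discsquare}. The tower bound on $k$ is likewise handled the same way — the paper records the one-line estimate $f^{(i)} \le 2 f(f^{(i-1)}) \le 32^{f^{(i-1)}}$, which is precisely the ``routine induction'' you flag as the remaining detail, so there is no gap.
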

\end{mytheorem}

To prove this, let $A$ be the adjacency matrix of $G$, and generate an approximating matrix $\widehat{A}$ by applying Theorem \ref{thm:slcra} (the Strong Regularity Lemma) to $A$ with function $f(n) = 16^n$.
Let $k'$ be the integer given by Theorem \ref{thm:slcra} (which is different from $k$ in this theorem).
We can bound this inductively: we can use the crude upper bound
\begin{align*}
f^{(i)} &= f^{(i-1)} + f(f^{(i-1)})\\
&\le 2\cdot f(f^{(i-1)})\\
&\le 32^{f^{(i-1)}},
\end{align*}
and thus $f^{(i)} \le \tower(O(i))$, and so
$$k' \le f^{(\eps^{-2})} \le \tower(O(\eps^{-2})).$$
We define the partition $V = V_1 \cup \dots \cup V_k$ by taking the common refinement of $\widehat{A}$, which thus has size
\begin{align*}
k &\le 2^{2 \cdot \cutrank(\widehat{A})}\\
&\le 4^{k'}
\end{align*}
which is still at most $\tower(O(\eps^{-2}))$.
We may then bound:
\begin{align*}
\sum \limits_{i,j=1}^k \disc(V_i, V_j) &\le n \cdot \left\| A - \Ahat \right\|_{\blacksquare[k^2]} \tag*{by Lemma \ref{lem:discsquare}}\\
&\le n \cdot \left\| A - \Ahat \right\|_{\blacksquare[(4^{k'})^2]}\\
&\le n \cdot (\eps \|A\|_F) \tag*{by Theorem \ref{thm:slcra}}\\
&\le n \cdot (\eps n) \tag*{since $A$ is a binary $n \times n$ matrix}\\
&= \eps n^2,
\end{align*}
completing the proof.

The following closely related version of the \Szemeredi{} regularity lemma can be obtained as a simple corollary.
Let us say that a pair of vertex subsets $(V_i, V_j)$ is \emph{irregular} if we have
$$\disc(V_i, V_j) > \eps |V_i||V_j|.$$

\begin{mytheorem}
\begin{theorem} [\Szemeredi{} Regularity Lemma, Graph Phrasing, Irregularity Version] \label{thm:graphregbd}
For any $n$-node graph $G = (V, E)$ and any $\eps > 0$, there exists an integer $k$ bounded by
$$k \le \tower(O(\eps^{-4}))$$
and a partition $V = V_1 \cup \dots \cup V_k$ such that
$$\sum \limits_{(V_i, V_j) \text{ irregular}} |V_i||V_j| < \eps n^2.$$
\end{theorem}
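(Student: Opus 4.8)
The plan is to derive this statement as an immediate corollary of the discrepancy version, Theorem \ref{thm:regdisc}, via a Markov-type averaging argument. First I would apply Theorem \ref{thm:regdisc} to $G$, but with the error parameter $\eps$ replaced by $\eps^2$. This produces an integer $k$ with
$$k \le \tower\left(O\left((\eps^2)^{-2}\right)\right) = \tower\left(O(\eps^{-4})\right)$$
together with a partition $V = V_1 \cup \dots \cup V_k$ satisfying $\sum_{i,j=1}^k \disc(V_i, V_j) \le \eps^2 n^2$. This is exactly the partition and the integer $k$ claimed in the theorem, so it remains only to verify the stated bound on the total size of the irregular pairs.

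For that, I would use the definition of irregularity directly: every irregular pair $(V_i, V_j)$ satisfies $\disc(V_i, V_j) > \eps |V_i||V_j|$. If there is at least one irregular pair, summing this strict inequality over all of them and comparing with the global discrepancy bound gives
$$\eps \sum_{(V_i, V_j) \text{ irregular}} |V_i||V_j| < \sum_{(V_i, V_j) \text{ irregular}} \disc(V_i, V_j) \le \sum_{i,j=1}^k \disc(V_i, V_j) \le \eps^2 n^2,$$
and dividing through by $\eps > 0$ yields $\sum_{(V_i, V_j) \text{ irregular}} |V_i||V_j| < \eps n^2$, as desired. If instead there are no irregular pairs at all, the left-hand side is $0$ and the bound holds trivially.

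I do not expect any real obstacle here beyond bookkeeping. The only two points that need a little care are: (i) choosing the substituted parameter to be $\eps^2$ (rather than, say, a constant multiple of $\eps$), so that the tower height comes out as $\tower(O(\eps^{-4}))$ exactly as stated; and (ii) handling the strict-versus-nonstrict inequalities cleanly — the strictness on the left is inherited from the strict inequality in the definition of an irregular pair (and summing finitely many strict inequalities over a nonempty index set stays strict), while the nonstrict bound on the right is precisely the conclusion of Theorem \ref{thm:regdisc}. Everything else is a routine application of the already-proved discrepancy version, with no new technical content.
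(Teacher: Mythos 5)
Your proposal is correct and matches the paper's proof essentially verbatim: apply Theorem \ref{thm:regdisc} with parameter $\eps^2$ to get $k \le \tower(O(\eps^{-4}))$ and total discrepancy at most $\eps^2 n^2$, then use the Markov-type bound $\disc(V_i,V_j) > \eps|V_i||V_j|$ on irregular pairs to conclude. The only (very minor) difference is that you explicitly handle the case of no irregular pairs, which the paper's chained strict inequality silently elides.
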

\end{mytheorem}

To prove this, apply Theorem \ref{thm:regdisc} with parameter $\eps^2$, giving a partition $V = V_1 \cup \dots \cup V_k$ with $k \le \tower(O(\eps^{-4}))$, satisfying
$$\sum \limits_{i, j=1}^k \disc(V_i, V_j) \le \eps^2 n^2.$$

So we have:
\begin{align*}
\sum \limits_{(V_i, V_j) \text{ irregular}} |V_i||V_j| &= \sum \limits_{(V_i, V_j) \ \mid \ \disc(V_i, V_j) > \eps |V_i||V_j|} |V_i||V_j|\\
&< \eps^{-1} \cdot \sum \limits_{(V_i, V_j) \ \mid \ \disc(V_i, V_j) > \eps |V_i||V_j|} \disc(V_i, V_j)\\
&\le \eps^{-1} \cdot \sum \limits_{(V_i, V_j)} \disc(V_i, V_j)\\
&\le \eps^{-1} \cdot \eps^2 n^2 \tag*{Theorem \ref{thm:regdisc}}\\
&= \eps n^2.
\end{align*}

\subsection{\Szemeredi{} Regularity Lemma, Exceptional Set Version}

Yet another popular phrasing of the regularity lemma hides some nodes in an ``exceptional set,'' while enforcing stronger properties on the remaining parts: all other parts have the same size, and most of them are regular.

\begin{mytheorem}
\begin{theorem} [\Szemeredi{} Regularity Lemma, Graph Phrasing, Exceptional Set Version]
For any $n$-node graph $G = (V, E)$ and any $\eps > 0$, there exists an integer $k$ bounded by
$$k \le \tower(O(\eps^{-4}))$$
and a partition $V = V_0 \cup V_1 \cup \dots \cup V_k$ such that:
\begin{itemize}
\item $|V_0| < \eps n$ ($V_0$ is called the ``exceptional set'')
\item For all $1 \le i, j \le k$, we have $|V_i| = |V_j|$
\item At most $\eps k^2$ pairs of parts $(V_i, V_j)$ are irregular.
\end{itemize}
\end{theorem}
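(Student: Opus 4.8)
The plan is to re-run the proof of Theorem~\ref{thm:regdisc} with one change: rather than reading the partition directly off the common refinement of $\Ahat$, I first \emph{equitize} that refinement, cutting each of its parts into blocks of a single common size and sweeping the leftover vertices of each part into the exceptional set $V_0$. To make this go through, I feed the Strong Regularity Lemma (Theorem~\ref{thm:slcra}) a faster-growing function $f$ than the $f(n) = 16^n$ used for Theorem~\ref{thm:regdisc}, so that the extra blocks produced by equitization are still covered by its $\blacksquare$-approximation guarantee. It is tempting to instead black-box Theorem~\ref{thm:regdisc} or Theorem~\ref{thm:graphregbd} and simply chop their output parts into equal pieces, and I expect the realization that this \emph{fails} to be the main obstacle: discrepancy does not shrink under restriction ($\disc(X', Y') \le \disc(X, Y)$ whenever $X' \sbs X$ and $Y' \sbs Y$, with no gain when $|X'| \ll |X|$), so a single large part of moderate discrepancy produces many sub-pairs of large discrepancy \emph{relative to their own size}, destroying the tower bound. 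Keeping the matrix $\Ahat$ available and applying Lemma~\ref{lem:discsquare} to the equitized partition directly is what avoids this.

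Concretely, let $A \in \rr^{n \times n}$ be the adjacency matrix of $G$, set $\eps' := \eps^2 / 16$, and apply Theorem~\ref{thm:slcra} to $A$ with error $\eps'$ and the nondecreasing function $f(n) := \lceil 16^{n+1} / \eps^2 \rceil$. This yields an integer $\kappa \le f^{(256 \eps^{-4})}$ and a matrix $\Ahat$ with $\cutrank(\Ahat) \le \kappa$ and $\left\| A - \Ahat \right\|_{\blacksquare[f(\kappa)]} \le \eps' \left\| A \right\|_F$; since $f$ is only singly-exponential, the crude iteration estimate from the proof of Theorem~\ref{thm:regdisc} still applies (the $\eps^{-2}$ factor in $f$ is absorbed into the tower), so $\kappa \le \tower(O(\eps^{-4}))$. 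Let $\{W_1, \dots, W_\ell\}$ be the common refinement of $\Ahat$, so $\ell \le 4^\kappa$ and $\Ahat$ is constant on each $W_i \times W_j$ block. If $n$ is below a threshold of order $\tower(O(\eps^{-4}))$, take the singleton partition: it has $k = n \le \tower(O(\eps^{-4}))$ parts, empty $V_0$, and no irregular pairs, since $\disc(\{u\}, \{v\}) = 0$ for all $u, v$. Otherwise $m := \lfloor \eps n / \ell \rfloor \ge 1$ and $\eps n / \ell \ge 2$; partition each $W_i$ into $\lfloor |W_i| / m \rfloor$ blocks of size exactly $m$, push the fewer than $m$ remaining vertices of each $W_i$ into $V_0$, and let $V_1, \dots, V_k$ be the resulting blocks. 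Then $|V_0| < \ell m \le \eps n$, every $|V_i| = m$, and $\ell / \eps \le n/m \le 2\ell / \eps$ gives $\ell / (2\eps) \le k \le 2 \ell / \eps \le 2 \cdot 4^\kappa / \eps \le \tower(O(\eps^{-4}))$.

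It remains to count irregular pairs. Each $V_i$ lies inside some part $W_{\sigma(i)}$ of the common refinement, so $\Ahat$ is constant on every $V_i \times V_j$ block; hence Lemma~\ref{lem:discsquare} applies to the disjoint sets $\{V_1, \dots, V_k\}$ and gives
$$\sum_{i, j = 1}^k \disc(V_i, V_j) \;\le\; n \cdot \left\| A - \Ahat \right\|_{\blacksquare[k^2]}.$$
By construction $k^2 \le (2 \cdot 4^\kappa / \eps)^2 = 4 \cdot 16^\kappa / \eps^2 \le 16^{\kappa + 1}/\eps^2 \le f(\kappa)$, and $\left\| \cdot \right\|_{\blacksquare[\cdot]}$ is nondecreasing in its bracket, so the Strong Regularity Lemma bound gives $\left\| A - \Ahat \right\|_{\blacksquare[k^2]} \le \left\| A - \Ahat \right\|_{\blacksquare[f(\kappa)]} \le \eps' \left\| A \right\|_F \le \eps' n$, whence $\sum_{i, j} \disc(V_i, V_j) \le \eps' n^2$. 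A pair $(V_i, V_j)$ is irregular precisely when $\disc(V_i, V_j) > \eps |V_i| |V_j| = \eps m^2$, so by Markov's inequality the number of irregular pairs is at most $\eps' n^2 / (\eps m^2) = (\eps' / \eps)(n/m)^2 \le (\eps'/\eps) \cdot 16 k^2 = \eps k^2$, using $n/m \le 4k$ and $\eps' = \eps^2/16$. This gives all three guarantees; the only delicate point is the joint choice of $\eps'$ and $f$ at the outset, the rest being routine bookkeeping.
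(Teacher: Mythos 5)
Your proof is correct and follows essentially the same route as the paper's: apply the Strong Regularity Lemma with error $\Theta(\eps^2)$ and a singly-exponential $f$, take the common refinement of $\Ahat$, equitize each part into blocks of a common size $\approx \eps n/\ell$ sweeping remainders into $V_0$, and then bound irregular pairs via Lemma~\ref{lem:discsquare} applied directly to the equitized partition. The one substantive difference is bookkeeping: the paper explicitly hand-waves divisibility (``we assume for convenience that $k''$ divides $\eps n$'') and uses $n/k$ where the true block size is $\eps n/k'' \le n/k$, while you track the floors honestly, compensate by choosing $\eps' = \eps^2/16$ and a correspondingly larger $f$, and dispose of the tiny-$n$ regime with singletons; your opening observation that one cannot simply chop the output of Theorem~\ref{thm:regdisc} or~\ref{thm:graphregbd} (since $\disc$ does not shrink under restriction) correctly identifies why $\Ahat$ must be kept in hand, which is the same reason the paper structures its proof this way.
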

\end{mytheorem}

To prove this, let $A$ be the adjacency matrix of $G$, and generate an approximating matrix $\Ahat$ by applying Theorem \ref{thm:slcra} (the Strong Regularity Lemma) to $A$ with error parameter $\eps^2$ and function
$$f(n) = \eps^{-2} \cdot 16^{n}.$$
Let $k'$ be the associated integer; following a similar calculation to before, we can bound
\begin{align*}
k' &\le f^{(\eps^{-4})} \le \tower(O(\eps^{-4})).
\end{align*}
(We note that the additional $\eps^{-2}$ factor in $f(n)$ will be needed later, but it does not affect the quantitative bounds much; its entire contribution to the value of $f^{(\eps^{-4})}$ is dominated by extending the tower height by, say, $+O(\log \eps^{-1})$ additional levels.)

Define a partition $V = V_1 \cup \dots \cup V_{k''}$ by taking the common refinement of the cut matrices used to build $\widehat{A}$ (this time, the common refinement is still not yet the partition that satisfies the theorem).
Then, for each part $V_i$ in the common refinement, take a subpartition
$$V_i = V_i^0 \cup V_i^1 \cup \dots \cup V_i^{a}$$
for which the parts in this subpartition have sizes
$$\left|V_i^0\right| < \left|V_i^1\right| = \dots = \left|V_i^a\right| = \frac{\eps n}{k''}$$
($V_i^0$ is possibly empty, and we assume for convenience that $k''$ divides $\eps n$ but otherwise terms may be rounded without significantly affecting the following argument).
Move all the nodes in $V_i^0$ into the exceptional set $V_0$.
The total size of the exceptional set becomes
\begin{align*}
\left|V_0\right| &= \sum \limits_{i=1}^{k''} \left|V_i^0\right|\\
&< \sum \limits_{i=1}^{k''} \frac{\eps n}{k''}\\
&= \eps n
\end{align*}
as claimed.
Now let $k$ be the total number of remaining non-exceptional parts across all subpartitions; we claim that these parts satisfy the rest of the theorem.
By construction they all have the same size $\eps n / k''$, and hence there are $k \le  k'' / \eps$ of them.
We can bound
\begin{align*}
k &\le \frac{k''}{\eps}\\
&\le \eps^{-1} \cdot 4^{k'}\\
&\le \eps^{-1} \cdot 4^{\tower(O(\eps^{-4}))}\\
&\le \tower(O(\eps^{-4})).
\end{align*}
So now it remains to count the number of non-exceptional irregular pairs of parts.
We have
\begin{align*}
\left|\left\{ \left(V_i^a, V_j^b\right) \text{ irregular} \right\}\right| \cdot \underbrace{\eps \cdot \left(\frac{n}{k}\right)^2}_{\substack{\text{lower bound on}\\\text{contribution to discrepancy}\\\text{of each irregular pair}}} &< \sum \limits_{i, j, a, b} \disc\left(V_i^a, V_j^b\right)\\
&\le n \cdot \left\|A - \widehat{A}\right\|_{\blacksquare[k^2]} \tag*{by Lemma \ref{lem:discsquare}}\\
&\le n \cdot \left\|A - \widehat{A}\right\|_{\blacksquare[(\eps^{-1} k'')^2]}\\
&\le n \cdot \left\|A - \widehat{A}\right\|_{\blacksquare[(\eps^{-1} 4^{k'})^2]}\\
&\le n \cdot \left\|A - \widehat{A}\right\|_{\blacksquare[\eps^{-2} 16^{k'}]}\\
&\le n \cdot (\eps^2 \|A\|_F) \tag*{by Theorem \ref{thm:slcra}}\\
&\le n \cdot (\eps^2 n) \tag*{since $A$ is a binary $n \times n$ matrix}\\
&= \eps^2 n^2.
\end{align*}
Thus, comparing the first to the last term and rearranging, we have
\begin{align*}
\left|\left\{ \left(V_i^a, V_j^b\right) \text{ irregular} \right\}\right| &< \eps k^2,
\end{align*}
completing the proof.

\section{Other Directions}

We will conclude by mentioning some technical directions that were not covered explicitly in these notes.

\paragraph{``Large Subset'' Regularity Lemmas.}

Some phrasings of regularity lemma will use a slightly different concept of irregularity than ours.
The alternative is that a pair of parts $(V_i, V_j)$ are ``irregular'' if there exist \textbf{large} subsets $S \subseteq V_i, T \subseteq V_j$, specifically of sizes $|S| \ge \eps |V_i|, T \ge \eps |V_j|$, with
$$\left| \frac{e(V_i V_j)}{|V_i||V_j|} - \frac{e(S, T)}{|S||T|} \right| \le \eps,$$
in other words, the edge density between $S, T$ fluctuates significantly from the edge density between $V_i, V_j$.
This is essentially just a cosmetic difference.
With a bit of algebra, one can show that this definition of irregularity is equivalent to ours, up to polynomial changes in $\eps$ that do not ultimately affect the statements of regularity lemmas.

\paragraph{``Edge Density'' Regularity Lemmas.}

Some phrasings of regularity lemmas use a slightly different definition of discrepancy, where the parameter $c_{ij}$ is not set by a $\min$ as in our definition, but rather is some fixed statistic of the parts $V_i, V_j$.
For example, a common one is
$$ \disc(V_i, V_j) := \max \limits_{S \subseteq V_i, T \subseteq V_j} \left| e(S, T) - \underbrace{\left( \frac{e(V_i, V_j)}{|V_i||V_j|}\right)}_{c_{ij}} |S||T| \right|,$$
i.e., $c_{ij}$ is the edge density between $V_i$ and $V_j$.
Although we are not aware of any application of regularity lemmas where this fixing of $c_{ij}$ is really necessary, if it is desired, it requires a tweak to the \emph{algorithm} rather than just a change in the analysis.
We would essentially have to push the part of the analysis where we take a common refinement of $\Ahat$ into \emph{each round} of the algorithm, rather than doing it once at the very end as in our approach.
Letting $V_1 \cup \dots \cup V_k$ be the common refinement of $\Ahat$ at a point in the algorithm, we would define $\Ahat$ by projecting $A$ onto the span of the refined cut matrices associated to each $V_i \times V_j$ block, and a bit of matrix algebra shows that the entries assigned to this block will indeed be the edge density of $A$ on that block.

\paragraph{The Graph-Theoretic Strong Regularity Lemma.} We have not covered the graph-theoretic version of the strong regularity lemma by Alon, Fischer, Krivelevich, and Szegedy \cite{AFKS00}.
Roughly speaking, this lemma promises both an \emph{outer} partition
$V = V_1 \cup \dots \cup V_k,$
and also an \emph{inner} partition that refines the outer one
$$V = \left(V_1^1 \cup \dots \cup V_1^{j_1}\right) \cup \dots \cup \left(V_k^1 \cup \dots \cup V_k^{j_k}\right).$$
The necessary properties are:
\begin{itemize}
\item The inner partition satisfies the regularity lemma with respect to a parameter $\eps_1$, and
\item For the ``average'' pair of outer parts $V_i, V_j$, the inner refinement of these parts satisfies the regularity lemma on the bipartite subgraph between these parts, with respect to a parameter $\eps_2$. 
\end{itemize}

This regularity lemma and its various phrasings can again be recovered from Theorem \ref{thm:slcra} using similar methods to the \Szemeredi{} regularity variants that we proved, although the algebra is a bit longer, due to the need to arrange and analyze a nested partition.

\paragraph{Sparse Graph Regularity Lemmas.}

Our first graph regularity lemma, Theorem \ref{thm:regdisc}, promises a vertex partition $V = V_1 \cup \dots \cup V_k$ satisfying
$$\sum \limits_{i, j=1}^k \disc(V_i, V_j) \le \eps n^2.$$ 
Any partition at all will give a bound of
$$\sum \limits_{i, j=1}^k \disc(V_i, V_j) \le |E|,$$
and therefore Theorem \ref{thm:regdisc} is nontrivial only when $|E| \ll \eps n^2$.
We might dream of an improved version of Theorem \ref{thm:regdisc} with a bound of the form
$$\sum \limits_{i, j=1}^k \disc(V_i, V_j) \overset{?}{\le} \eps |E|,$$ 
which would therefore be nontrivial on all graphs.
Sadly, this is too good to be true in general, but the area of \emph{sparse regularity lemmas} studies specific graph classes on which stronger versions of regularity lemmas with error bounds of the form $\eps |E|$ do exist.
At a technical level, this often boils down to graph classes on which the prominent application of Cauchy-Schwarz in Lemma \ref{lem:discsquare} loses at most a constant factor.

\paragraph{Other Inner Products.}

There is nothing very special about the Frobenius inner product in this discussion; all of these regularity lemmas can be extended to an arbitrary inner product space.
Let $V$ be an inner product space (for example, the space of $m\times n$ matrices equipped with the Frobenius inner product in Theorem \ref{thm:slcra}.)
Let $C\sbs V$ be a distinguished subset of ``atomic'' elements (for example, cut matrices in Theorem \ref{thm:slcra}).
For an element $A\in V$, define the $C$-rank of $A$ to be the minimum $k$ such that one can write 
\[A = \sum_{j = 1}^k c_jC_j\]
where $c_j\in \R$ and $C_j\in C.$ If there is no such finite $k$, we say $A$ has infinite $C$-rank.
We write 
\[\|A\|_{C[k]} := \sup \limits_{A' \text{ is a projection of $A$ of $C$-rank $\le k$}} \| A' \|,\]
where the norm on the right-hand side is the norm induced by the inner product, and the projection is defined with respect to the inner product.
We then have the following theorem:
\begin{mytheorem}
\begin{theorem} [Generalized Strong Regularity Lemma] \label{thm:gslcra}
For any $A\in V, \eps > 0$, and function $f : \nn \to \nn$, \textbf{there exists} an integer $k \le f^{(\eps^{-2})}$
and $\Ahat\in V$ satisfying:
\begin{itemize}
\item (simplicity) $\Ahat$ has $C$-rank $\le k$, and
\item (approximating) $\|A - \Ahat\|_{\blacksquare[f(k)]} \le \eps \|A\|$.
\end{itemize}
\end{theorem}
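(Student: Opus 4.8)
The plan is to observe that the proofs of Theorems~\ref{thm:slra} and~\ref{thm:slcra} used essentially nothing about matrices, cut matrices, or the Frobenius inner product: they relied only on the inner-product-space axioms, the existence of orthogonal projections onto the relevant subspaces, and the Pythagorean identity. So I would simply rerun the same greedy algorithm in the abstract setting: initialize $\Ahat \gets 0$, and while there exists an orthogonal projection $Q$ of the residual $A - \Ahat$ onto the span of at most $f(r)$ atoms of $C$ — where $r$ denotes the current $C$-rank of $\Ahat$ — with $\|Q\| > \eps\|A\|$, update $\Ahat \gets \Ahat + Q$; return $\Ahat$ once no such $Q$ exists.

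The analysis then transcribes verbatim. For termination and the round count, use $\|A - \Ahat\|^2$ as a potential: it equals $\|A\|^2$ initially, is always nonnegative, and — since $Q$ is the orthogonal projection of $A - \Ahat$ onto a subspace, so $A - \Ahat - Q$ is orthogonal to $Q$, and the Pythagorean identity gives $\|Q\|^2 = \|A-\Ahat\|^2 - \|A-\Ahat-Q\|^2$ — it decreases by $\|Q\|^2 > \eps^2\|A\|^2$ in each round; hence there are fewer than $\eps^{-2}$ rounds. For the simplicity bound, note first that $\Ahat$ is always a finite sum of atoms, hence has finite $C$-rank regardless of whether $A$ does — this is exactly why the loop queries $f$ at $C$-rank$(\Ahat)$ rather than at $A$. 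Then induct as in Theorem~\ref{thm:slra}: if $\Ahat$ has $C$-rank $r \le f^{(i-1)}$ before round $i$, then afterward its $C$-rank is at most $r + f(r) \le f^{(i-1)} + f(f^{(i-1)}) = f^{(i)}$ (using that $f$ is nondecreasing), so the returned $\Ahat$ has $C$-rank $k \le f^{(\eps^{-2})}$. Finally, for the approximation guarantee, once the loop halts every projection $Q$ of $A - \Ahat$ of $C$-rank $\le f(k)$ satisfies $\|Q\| \le \eps\|A\|$; taking the supremum over all such $Q$ is, by definition of the norm, $\|A - \Ahat\|_{C[f(k)]} \le \eps\|A\|$.

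The one point needing a sentence of care — and the only plausible obstacle, a mild one — is the existence of the orthogonal projections used by the algorithm when $V$ is infinite-dimensional or incomplete. This is not a genuine issue: the subspaces being projected onto are always finite-dimensional (spanned by finitely many atoms), and every finite-dimensional subspace of an inner product space is closed and admits an orthogonal projection, e.g.\ by Gram--Schmidt, with no completeness assumption needed. A secondary bookkeeping remark is that $\|\cdot\|_{C[\cdot]}$ is defined by a supremum rather than a maximum, but this too is harmless: failure of the loop condition bounds every individual projection of the residual, and hence their supremum. With these two observations in hand, the proof is word-for-word the proof of Theorem~\ref{thm:slcra}.
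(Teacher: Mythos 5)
Your proof is correct and matches the paper's approach exactly: the paper simply asserts that the proof of Theorem~\ref{thm:slcra} carries over verbatim to the abstract inner-product-space setting, which is precisely what you argue. Your two cautionary remarks---that the subspaces being projected onto are finite-dimensional (so orthogonal projections exist without any completeness assumption), and that the $\sup$ in the definition of $\|\cdot\|_{C[k]}$ is controlled just as well as a $\max$ once the loop halts---are valid and cover the only points at which the transcription could conceivably have failed.
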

\end{mytheorem}
The proof is exactly the same as before.
This type of generality was first developed by Lov{\'a}sz and Szegedy \cite{LS07}.

\paragraph{Extension to Graphons.}

A \emph{graphon} is, intuitively speaking, a model of the adjacency matrix for a ``well-behaved'' infinite graph with edge weights in the interval $[0, 1]$.
It is officially defined as a symmetric measurable function $W\colon [0,1]^2\to [0,1]$; we can roughly imagine $W(i, j)$ as the weight of the ``edge'' between the points $i, j$.
An individual finite graph $G$ can be embedded into a graphon by mapping each node to a disjoint subinterval of $[0, 1]$, and in some sense graphons are the natural limit points of ``convergent'' infinite sequences of graphs.

Regularity lemmas extend to graphons as well, which are technically more general than our graph phrasings.
Given subsets $V_i, V_j\sbs [0,1]$, their \emph{discrepancy} in a graphon $W$ is defined as the quantity
$$ \disc(V_i, V_j) := \min_{c_{ij}} \max \limits_{S \subseteq V_i, T \subseteq V_j} \left| \int_{S\times T}W - c_{ij} m(S)m(T) \right|,$$
where $m$ denotes the Lesbegue measure.
%By taking $V = L^2([0,1]^2)$ (the space of square-integrable function $f\colon [0,1]^2\to \R$) and $C$ to be cut functions (e.g. function of the form $1_{S\times T}$ for some measurable $S, T\sbs [0,1]$)
We then have:
\begin{mytheorem}
\begin{theorem} [\Szemeredi{} Regularity Lemma, Graphon Phrasing, Discrepancy Version]
For any graphon $W$ and any $\eps > 0$, there exists an integer $k$ bounded by
$$k \le \tower(O(\eps^{-2}))$$
and a partition $[0,1] = V_1 \cup \dots \cup V_k$ such that
$$\sum \limits_{i, j=1}^k \disc(V_i, V_j) \le \eps.$$ 
\end{theorem}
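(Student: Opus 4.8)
The plan is to replay the proof of Theorem~\ref{thm:regdisc} essentially verbatim, but in the Hilbert space of $L^2$ kernels in place of the space of matrices, invoking the Generalized Strong Regularity Lemma (Theorem~\ref{thm:gslcra}) in place of the Strong Regularity Lemma. Concretely, I would instantiate Theorem~\ref{thm:gslcra} with $V = L^2([0,1]^2)$ under the inner product $\langle U, W\rangle = \int_{[0,1]^2} UW$, and with the atomic set $C$ taken to be the ``cut kernels'' $\mathbb 1_{S\times T}$ ranging over measurable $S,T\subseteq[0,1]$; the resulting $C$-rank is the graphon analog of cutrank, and $\|\cdot\|_{C[\cdot]}$ is the graphon analog of $\|\cdot\|_{\blacksquare[\cdot]}$. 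Since $W$ is $[0,1]$-valued on a domain of measure $1$ we have $\|W\|_{L^2}\le 1$. First I would apply Theorem~\ref{thm:gslcra} to $W$ with error $\eps$ and $f(n) = 16^n$, obtaining an integer $k' \le f^{(\eps^{-2})}$ and a kernel $\widehat W$ of $C$-rank at most $k'$ with $\|W - \widehat W\|_{C[f(k')]} \le \eps\|W\|_{L^2} \le \eps$; the bound $f^{(\eps^{-2})} \le \tower(O(\eps^{-2}))$ comes from the same crude estimate $f^{(i)} \le 32^{f^{(i-1)}}$ used in the proof of Theorem~\ref{thm:regdisc}.

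Next I would take the common refinement of $\widehat W$: writing $\widehat W = \sum_{\ell=1}^{k'} c_\ell\,\mathbb 1_{S_\ell\times T_\ell}$ as a minimal cut-kernel decomposition, the common refinement of the $\le 2k'$ measurable sets $\{S_\ell\}\cup\{T_\ell\}$ is a partition $[0,1] = V_1\cup\dots\cup V_k$ into $k\le 2^{2k'} = 4^{k'}$ measurable parts, on each block $V_i\times V_j$ of which $\widehat W$ is almost everywhere equal to some constant $c_{ij}$; this $k$ is still at most $\tower(O(\eps^{-2}))$. It then remains to establish the graphon analog of Lemma~\ref{lem:discsquare}: for any partition $[0,1]=V_1\cup\dots\cup V_k$ and any kernel $\widehat W$ that is a.e.\ constant on each $V_i\times V_j$, one has $\sum_{i,j=1}^k\disc(V_i,V_j)\le\|W-\widehat W\|_{C[k^2]}$. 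The argument is the same Cauchy--Schwarz computation as in Lemma~\ref{lem:discsquare}: choose $S_{ij}\subseteq V_i$, $T_{ij}\subseteq V_j$ (nearly) achieving $\disc(V_i,V_j)$; note that $\disc(V_i,V_j) = |\langle W-\widehat W,\mathbb 1_{S_{ij}\times T_{ij}}\rangle|$ because $\widehat W\equiv c_{ij}$ on $V_i\times V_j$; normalize to the unit vectors $\mathbb 1_{S_{ij}\times T_{ij}}/\sqrt{m(S_{ij})m(T_{ij})}$, which are pairwise orthogonal (disjoint supports) and hence span a subspace of $C$-rank at most $k^2$; then apply Cauchy--Schwarz, using $\sum_{i,j}m(S_{ij})m(T_{ij})\le\sum_{i,j}m(V_i)m(V_j) = \big(\sum_i m(V_i)\big)^2 = 1$ in place of the factor $n$ from the matrix version. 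Finally I would chain $\sum_{i,j}\disc(V_i,V_j)\le\|W-\widehat W\|_{C[k^2]}\le\|W-\widehat W\|_{C[16^{k'}]}=\|W-\widehat W\|_{C[f(k')]}\le\eps$, where the middle inequality uses $k^2\le(4^{k'})^2 = 16^{k'} = f(k')$.

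The main obstacle is not conceptual but measure-theoretic bookkeeping. One must check that Theorem~\ref{thm:gslcra} really does apply in the infinite-dimensional space $L^2([0,1]^2)$ --- the greedy algorithm and its potential argument are unchanged, since each round still projects onto a finite-dimensional subspace spanned by finitely many cut kernels --- that the common refinement of finitely many measurable sets is again a measurable partition on whose blocks $\widehat W$ is a.e.\ constant (discarding null parts), and that the $\max$ defining $\disc(V_i,V_j)$ behaves well; the cleanest way to handle the last point is to pick $S_{ij},T_{ij}$ within $\eps/k^2$ of the supremum and absorb a harmless additive $\eps$ of slack (or run Theorem~\ref{thm:gslcra} with error $\eps/2$ instead), though in fact the supremum is attained. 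None of these points needs an idea beyond the corresponding step in the proof of Theorem~\ref{thm:regdisc}.
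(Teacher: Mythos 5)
The paper states this theorem without proof, deferring to Lov\'asz's book, so there is no proof of record to compare against; however, your proposal carries out exactly the argument the exposition implies --- instantiate the Generalized Strong Regularity Lemma (Theorem~\ref{thm:gslcra}) in the Hilbert space $L^2([0,1]^2)$ with cut kernels $\mathbb{1}_{S\times T}$ as atoms, then replay the proof of Theorem~\ref{thm:regdisc}, with the factor $n$ in Lemma~\ref{lem:discsquare} replaced by $1$ since the parts now have total measure $1$ rather than total size $n$. Your argument is correct, and you correctly flag the measure-theoretic points that need checking (finite-rank projections in an infinite-dimensional space, measurability of the common refinement, attainment of the supremum defining $\disc$); the one small imprecision is that $\disc(V_i,V_j)$ is bounded \emph{above} by, not equal to, $\left|\langle W-\widehat{W},\mathbb{1}_{S_{ij}\times T_{ij}}\rangle\right|$, since the $\min$ over $c_{ij}$ in the definition of discrepancy need not be achieved by the constant value of $\widehat{W}$ on the block, but this inequality runs in the direction you need.
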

\end{mytheorem}

For more on this direction, we recommend the book by Lov{\' a}sz \cite{Lovasz2012}.

\section*{Acknowledgments}

We are grateful to Santosh Vempala and Yuval Filmus for helpful technical discussions.

\bibliographystyle{plain}
\bibliography{refs}

\begin{thebibliography}{10}

\bibitem{AB24}
Amir Abboud and Greg Bodwin.
\newblock Reachability preservers: New extremal bounds and approximation
  algorithms.
\newblock {\em SIAM Journal on Computing}, 53(2):221--246, 2024.

\bibitem{survey}
Reyan Ahmed, Greg Bodwin, Faryad~Darabi Sahneh, Keaton Hamm, Mohammad
  Javad~Latifi Jebelli, Stephen Kobourov, and Richard Spence.
\newblock Graph spanners: A tutorial review.
\newblock {\em Computer Science Review}, 37:100253, 2020.

\bibitem{AFKS00}
Noga Alon, Eldar Fischer, Michael Krivelevich, and Mario Szegedy.
\newblock Efficient testing of large graphs.
\newblock {\em Combinatorica}, 20(4):451--476, 2000.

\bibitem{AN04}
Noga Alon and Assaf Naor.
\newblock Approximating the cut-norm via grothendieck's inequality.
\newblock In {\em Proceedings of the thirty-sixth annual ACM symposium on
  Theory of computing}, pages 72--80, 2004.

\bibitem{BSS09}
Joshua~D Batson, Daniel~A Spielman, and Nikhil Srivastava.
\newblock Twice-ramanujan sparsifiers.
\newblock In {\em Proceedings of the forty-first annual ACM symposium on Theory
  of computing}, pages 255--262, 2009.

\bibitem{BV22}
Greg Bodwin and Santosh Vempala.
\newblock A unified view of graph regularity via matrix decompositions.
\newblock {\em Random Structures \& Algorithms}, 61(1):62--83, 2022.

\bibitem{Charikar00}
Moses Charikar.
\newblock Greedy approximation algorithms for finding dense components in a
  graph.
\newblock In {\em International workshop on approximation algorithms for
  combinatorial optimization}, pages 84--95. Springer, 2000.

\bibitem{CE06}
Don Coppersmith and Michael Elkin.
\newblock Sparse sourcewise and pairwise distance preservers.
\newblock {\em SIAM Journal on Discrete Mathematics}, 20(2):463--501, 2006.

\bibitem{FK99}
Alan Frieze and Ravi Kannan.
\newblock Quick approximation to matrices and applications.
\newblock {\em Combinatorica}, 19(2):175--220, 1999.

\bibitem{KR13}
Robert Krauthgamer and Inbal Rika.
\newblock Mimicking networks and succinct representations of terminal cuts.
\newblock In {\em Proceedings of the twenty-fourth annual ACM-SIAM symposium on
  Discrete algorithms}, pages 1789--1799. SIAM, 2013.

\bibitem{thesis}
Tuong Le.
\newblock The linear algebraic view of regularity lemmas.
\newblock Undergraduate honors thesis, University of Michigan, Ann Arbor, MI,
  2025.

\bibitem{Liu23}
Yang~P. Liu.
\newblock {Vertex Sparsification for Edge Connectivity in Polynomial Time}.
\newblock In {\em 14th Innovations in Theoretical Computer Science Conference
  (ITCS 2023)}, 2023.

\bibitem{Lovasz2012}
L{\'a}szl{\'o} Lov{\'a}sz.
\newblock {\em Large networks and graph limits}, volume~60.
\newblock American Mathematical Soc., 2012.

\bibitem{LS07}
L{\'a}szl{\'o} Lov{\'a}sz and Bal{\'a}zs Szegedy.
\newblock Szemer{\'e}di’s lemma for the analyst.
\newblock {\em GAFA Geometric And Functional Analysis}, 17:252--270, 2007.

\bibitem{PU89}
David Peleg and Alejandro~A Sch{\"a}ffer.
\newblock Graph spanners.
\newblock {\em Journal of graph theory}, 13(1):99--116, 1989.

\bibitem{ST11}
Daniel~A Spielman and Shang-Hua Teng.
\newblock Spectral sparsification of graphs.
\newblock {\em SIAM Journal on Computing}, 40(4):981--1025, 2011.

\bibitem{S75}
E.~Szemer\'edi.
\newblock On sets of integers containing no {$k$} elements in arithmetic
  progression.
\newblock {\em Acta Arith.}, 27:199--245, 1975.

\end{thebibliography}

\end{document}